\documentclass[11pt]{article}
\usepackage[totalwidth=13.0cm,totalheight=20.0cm]{geometry}
\usepackage{latexsym,amsthm,amsmath,amssymb, amsbsy,url}
\usepackage[ruled, linesnumbered]{algorithm2e}
\usepackage{authblk}
\usepackage{tikz}
\usepackage[retainorgcmds]{IEEEtrantools}

\usetikzlibrary{decorations.pathreplacing}

\newtheorem{lemma}{Lemma}
\newtheorem{corollary}{Corollary}
\newtheorem{definition}{Definition}
\newtheorem{theorem}{Theorem}

\newtheorem{proposition}{Proposition}

\title{The Euler and Chinese Postman Problems on 2-Arc-Colored Digraphs\footnote{Research of GG was partially supported by Royal Society Wolfson Research Merit Award. Research of RL was partially supported by NNSFC under no. 11401353 and TYAL of Shanxi. Research of BS was supported by China Scholarship Council.}}
\author[1]{Bin Sheng}
\author[2]{Ruijuan Li}
\author[1]{Gregory Gutin\footnote{Corresponding author. Email: g.gutin@rhul.ac.uk}}
\affil[1]{Department of Computer Science, Royal Holloway\\ University of London, UK}
\affil[2]{School of Mathematical Sciences, Shanxi University, P.R. China}
\begin{document}

\maketitle
\begin{abstract}
The famous Chinese Postman Problem (CPP) is polynomial time solvable on both undirected and directed graphs. Gutin et al. [Discrete Applied Math 217 (2016)]   generalized these results by proving that CPP on $c$-edge-colored graphs is polynomial time solvable for every $c\geq 2$. In CPP on weighted edge-colored graphs $G$, we wish to find a minimum weight properly colored closed walk containing all edges of $G$ (a walk is properly colored if every two consecutive edges are of different color, including the last and first edges in a closed walk). In this paper, we consider CPP on arc-colored digraphs (for properly colored closed directed walks), and provide a polynomial-time algorithm for the problem on weighted 2-arc-colored digraphs. This is a somewhat surprising result since it is NP-complete to decide whether a 2-arc-colored digraph has a properly colored directed cycle [Gutin et al., Discrete Math 191 (1998)]. 
To obtain the polynomial-time algorithm, we characterize 2-arc-colored digraphs containing properly colored Euler trails. 
\end{abstract}

\section{Introduction}\label{sec:intro}
In this paper, we study the Euler and Chinese Postman Problems on edge-colored  digraphs. (To facilitate reading of Section \ref{sec:intro}, we use the term edge for both directed and undirected graphs. However, we switch to the term arc for digraphs starting from Definition \ref{def:cpp} as the term arc is widely used in the digraph literature, cf. \cite{bang2009digraphs}.) A (directed or undirected) multigraph $G$ is called {\em $c$-edge-colored} if each edge is assigned a color from $[c] = \{1, 2, \ldots, c\}$. Note that the $c$-edge coloring can be arbitrary, not necessarily proper. Most of research on edge-colored multigraphs is related to properly colored walks. A \textit{properly colored (PC)} walk is a walk in which no two consecutive edges have the same color, including the last and first edges in a closed walk. (Since henceforth in digraphs we will deal only with directed walks, trails, cycles and paths, we will omit adjective ``directed'' in such cases.)

PC walks in edge-colored undirected multigraphs are of interest in many applications.
For instance, in genetic and molecular biology \cite{pevzner18computational,szachniuk2014orderly,szachniuk2009assignment}, where Hamilton cycles/Euler trails with certain color pattern are to be found, in transportation and connectivity problems \cite{amaldi2011minimum,galbiati2014minimum,gamvros2006satellite}, where reload costs associated with each pair of colors for incident edges are considered, in design of printed circuit and wiring boards \cite{tseng2010obstacle}, and in channel assignment in wireless networks \cite{ahuja2010algorithms,sankararaman2014channel}. In \cite{xu2010using,xu2009matrix}, edge-colored directed multigraphs are used to model conflict resolution. The graph model can be viewed as a game theory-related tool that can assist negotiators with the strategic aspects of a negotiation. An edge-colored directed multigraph of a conflict allows for an extensive analysis of the possible strategic interactions among decision makers or agents.

There are many positive algorithmic results on PC walks in edge-colored graphs, for a detailed survey, we refer interested readers to Chapter 16 of \cite{bang2009digraphs} for pre-2009 literature, and to, e.g., \cite{abouelaoualim2010cycles,fujita2011properly,gutin2017chinese,gutin2017odd,lo2014dirac,lo2014edge} for later publications. Unfortunately, most problems turn out to be much harder for edge-colored digraphs than for edge-colored undirected graphs.  In particular, it was proved to be NP-hard to decide whether there is a PC cycle
in a given 2-edge-colored digraph \cite{gutin1998note}. In comparison, the problem of deciding the existence of PC cycle in a $c$-edge-colored undirected graph is polynomial-time solvable for every $c\ge 2$ \cite{yeo1997note}. 
In \cite{gourves2013complexity}, the authors proved that it is NP-hard to decide whether there is a PC path between two given vertices even in a $c$-edge-colored planar digraph which contains no PC cycle for $c=\Omega(|V(G)|)$. In the same paper, it was proved that deciding the existence of a PC cycle through a given vertex in a $c$-edge-colored tournament $T$ is NP-hard. In addition, deciding whether $T$ has a PC $s$-$t$ path or a PC Hamilton $s$-$t$ path is NP-complete. However, there were a couple of positive results proved in \cite{gourves2013complexity}, in particular, it was proved to be polynomial-time solvable to decide whether an edge-colored digraph contains a PC closed trail and to compute the maximum number of edge disjoint PC trails between any two vertices.

 As PC walks in edge-colored undirected graphs (and thus in edge-colored digraphs) are generalizations of walks in both undirected and directed graphs, we would like to extend results on walks in undirected and directed graphs to edge-colored undirected and directed graphs whenever it is possible. For example, the Chinese Postman Problem is polynomial-time solvable on both undirected and directed multigraphs \cite{bernhard2008combinatorial,edmonds1973matching}. This result was recently generalized to weighted edge-colored undirected graphs $G$ in \cite{gutin2017chinese}: we can compute a minimum weight PC walk in $G$ in polynomial time.  While we do not know whether this result can be further extended to all $c$-edge-colored digraphs for $c\ge 2$, we will show that it can be done for $c=2$. To obtain our result, we first prove a characterization of {\em PC Euler 2-edge-colored digraphs}, i.e.  2-edge-colored digraphs $G$ which have a PC Euler trail. (Recall that a {\em trail} is a walk without repetition of edges and a trail is {\em Euler} if it is closed and contains all edges of $G$.) 
Our characterization is of independent interest and generalizes a characterization of Kotzig \cite{kotzig1968moves} of PC Euler $c$-edge-colored undirected graphs for $c=2$. Note that our characterization requires a new notion of PC trail-connectivity introduced in Section \ref{sec:main}.

Let us give a formal definition the Chinese Postman Problem studied in this paper.

\begin{definition}[CPP-ACD]\label{def:cpp}
 Given a $c$-arc-colored digraph $G$ for $c\ge 2$, with non-negative weights on its edges, find a {PC closed walk} in $G$ which traverses each arc of $G$ at least once and has the minimum weight among all such walks.
\end{definition}

CPP-ACD with $c=2$ will be denoted by CPP-2ACD.
A {PC closed walk} in $G$ which traverses each arc of $G$ at least once is called a \textit{feasible solution} for CPP-ACD on $G$. A feasible solution with minimum weight is called an \textit{optimal solution} for CPP-ACD. Observe that for a given arc-colored digraph $G$, it is possible that there is no feasible solution for CPP-ACD on $G$. For instance, any digraph with at least two arcs in which all arcs have the same color has no solution for CPP-ACD.

We present our results in the following order. We will first provide necessary and sufficient conditions for a $c$-arc-colored digraph to have a feasible solution for CPP-ACD. Here we already need the new notion of PC trail-connectivity. Then we introduce another notion, that of PC circuits, using which we will show how to find a PC Euler trail in a 2-edge-colored digraph if it contains one.  At last we will prove that CPP-2ACD is polynomial time solvable.

It is unclear whether CPP-ACD, in all its generality, is polynomial-time solvable as 
some generalizations of the Chinese Postman Problem (CPP) were proved to be NP-hard, such as CPP on mixed multigraphs \cite{papadimitriou1976complexity} and $k$-CPP on both undirected and directed graphs \cite{gutin2013parameterized,thomassen1997complexity}. Parameterized algorithms is a powerful tool to tackle NP-hard problems, therefore there are several parameterized studies for these hard variants already, cf. \cite{fernandes2009minimum,gutin2015parameterized:2,gutin2014parameterized:3,gutin2015structural,gutin2013parameterized}. For a systematic introduction on classical and parameterized study of CPP and its generalizations, see the comprehensive survey \cite{van2014complexity}.

\section{Notation and Terminology}\label{sec:not}
For most of the graph theoretical concepts used in this paper, we follow the notation and terminology in \cite{bang2009digraphs,bollobas2013modern}.

Given a digraph $G$, if $uv\in A(G)$, then we call the arc $uv$ an \textit{incoming arc} of $v$ and an \textit{outgoing arc} of $u$; both $u$ and $v$ are called the \textit{end-vertices} of $uv$. More specifically, $v$ is the \textit{head} of $uv$ and $u$ is the \textit{tail} of $uv$.
We say a digraph $G$ is \textit{strongly connected} if there is a path from $u$ to $v$ and a path from $v$ to $u$ for any two vertices $u,v$ in $G$.
In a weighted digraph, the weight of a walk is the total weights of the arcs in the path. 

Let $f=xy$ be an arc in an weighted directed multigraph $G$. The operation of \textit{double subdividing} $f$ replaces $f$ with an weighted directed path $P_f=xu_{xy}v_{xy}y$ from $x$ to $y$ with three arcs such that the weight of $P_f$ equals to that of $f$.
Let $v$ be a vertex of a digraph $G$. By \textit{splitting} the vertex $v$ we mean adding a vertex $v'$ and a new arc $vv'$ and then replacing each arc $vw$ with the arc $v'w$. And we say vertex $v$ {\em is split} into the arc $vv'$.


 In an arc-colored digraph $G=(V(G),A(G))$ with an arc coloring $\phi: A(G)\rightarrow [c]$, let $T = v_1v_2\ldots v_{p-1}v_{p}$ be a trail in $G$. We say $T$ is a trail \textit{starting at vertex $v_1$ with arc $v_1v_2$} and \textit{ending at vertex $v_p$ with arc $v_{p-1}v_p$}; sometimes we call $T$ a trail \textit{starting  with color} $\phi(v_1v_2)$ and \textit{ending  with color} $\phi(v_{p-1}v_p)$.
Recall that a trail $T$ is {\em properly colored} if $\phi(v_{i}v_{i+1})\neq \phi(v_{i+1}v_{i+2})$ for any $i\in [p-2]$, and $\phi(v_{1}v_2)\neq \phi(v_{p-1}v_p)$ if $T$ is a closed trail.
For subgraphs $G_1,G_2$ of $G$, a PC trail $T = v_1v_2\ldots v_{p-1}v_{p}$ \textit{switches} from $G_1$ to $G_2$ via the vertex $v_2$, if $v_1v_2\in A(G_1)$ and $v_2v_3\in A(G_2)$.

Let $G=(V,A)$ be a $c$-arc-colored directed multigraph, whose arc coloring is denoted by $\phi: A(G)\rightarrow [c]$. For a vertex $v \in V(G)$ and color $i\in[c]$, let $d_{i, G}^{-}(v)$ be the number of incoming arcs of $v$ colored with color $i$ in $G$, and $d_{i,G}^{+}(v)$ be the number of outgoing arcs of $v$ colored with color $i$ in $G$. Then $d^{-}_{G}(v)=\Sigma_{i \in [c]}d_{i,G}^{-}(v)$ is the in-degree of $v$ in $G$, and  $d^{+}_{G}(v)=\Sigma_{i \in [c]}d_{i, G}^{+}(v)$ is the out-degree of $v$ in $G$. We write $i \in \phi^{-}(v)$ if there is an arc $uv \in G$, such that $\phi(uv) = i$ and similarly $i \in \phi^{+}(v)$ if there is an arc $vw \in G$, such that $\phi(vw) = i$, for $i\in [c]$. We write $u \in N^{-}_{i}(v)$ if $uv \in A(G)$ and $\phi(uv) = i$, and we write $w \in N^{+}_{i}(v)$ if $vw \in A(G)$ and $\phi(vw) = i$.

The \textit{underlying graph} of an arc-colored digraph $G$ is the undirected graph $H$ where $V(H)=V(G)$, and $uv \in E(H)$ if and only if $uv \in A(G)$ or $vu\in A(G)$.
An arc-colored digraph is \textit{connected} if its underlying undirected graph is connected. And so when we talk about a connected component in an arc-colored digraph, we mean its connected component in the underlying graph.  In an arc-colored directed multigraph $G$, the \textit{multiplicity} of an arc $f$ is the number of arcs that have same tail, head and color; we denote the multiplicity of arc $f$ by $\mu(f)$.
{Given an arc-colored digraph $G$ and a subgraph $H$ of $G$, $G-H$ ($G+H$) is the arc-colored digraph obtained by deleting (adding) all the arcs of $H$ from $G$.} We say $H$ is a \textit{PC Euler subgraph} of $G$ if $H$ is a subgraph of $G$ and $H$ is PC Euler.


For two weighted arc-colored  directed multigraphs $G_1$ and $G_2$, we define the \textit{union} of $G_1$ and $G_2$ to be the weighted arc-colored  directed multigraph $H$, such that $V(H)= V(G_1)\cup V(G_2)$, and $A(H)$ is the union of the two multi-sets $A(G_1)$ and $A(G_2)$, i.e., $\mu_{H}(xy) = \mu_{G_1}(xy)+\mu_{G_2}(xy)$, for any $x,y\in V(H)$.
The arc coloring of $H$, $\phi: A(H) \rightarrow [c]$ inherits from $\phi_{1}: A(G_1) \rightarrow [c]$ and $\phi_{2}: A(G_2) \rightarrow [c]$, that is, $\phi(xy) = \phi_{i}(xy)$ if $xy$ is a copy comes from $A(G_i)$, for $i\in [2]$. The arc weight of $H$, $\omega: A(H) \rightarrow \mathbb{R}_{\geq 0}$ is determined by $\omega_{1}: A(G_1) \rightarrow \mathbb{R}_{\geq 0}$ and $\omega_{2}: A(G_2) \rightarrow \mathbb{R}_{\geq 0}$, that is, $\omega(xy) = \omega_i(xy)$ if $xy$ is a copy comes from $A(G_i)$, for $i\in [2]$.

\section{CPP-ACD}\label{sec:main}

In a $c$-arc-colored digraph $G$, a vertex $v$ is called \textit{color-balanced}  if $d^-(v)=d^+(v)$, and $d^{+}_{i,G}(v)\leq \sum_{j\neq i\in[c]} d^{-}_{j,G}(v)$ and $d^{-}_{i,G}(v)\leq \sum_{j\neq i\in[c]} d^{+}_{j,G}(v)$ for any $i\in[c]$.
We say that $G$ is \textit{color-balanced} if every vertex in $G$ is color-balanced.
Note that, in a $2$-arc-colored digraph $G$, a vertex $v$ is \textit{color-balanced} if and only if $d^{+}_{i,G}(v)= d^{-}_{3-i,G}(v)$, for $i\in[2]$.

It is easy to see that in our study of the Euler and Chinese Postman Problems, we may restrict ourselves to arc-colored digraphs rather than arc-colored directed multigraphs: If there is an arc $f$ with multiplicity $\mu(f)>1$, it suffices to double subdivide each copy $g$ of $f$ (obtaining a path $P_g$) and assign color $\phi(f)$ to the first and third arcs of $P_g$ and color $3-\phi(f)$ to middle arc. The weight of $g$ can be arbitrarily distributed to the arcs of $P_g.$

An arc-colored digraph $G$ is \textit{PC trail-connected}, if there is a PC trail starting with arc $f_1$ and ending with arc $f_2$, for any pair of arcs  $f_1, f_2$ in $G$.

\subsection{Feasibility in CPP-ACD}\label{sec:tc}

Let us start from the following simple yet useful assertion.

\begin{proposition}\label{proposition}
 Let $G$ be a $c$-arc-colored digraph with at least 2 arcs. If $G$ is PC trail-connected, then we have the following.
 \begin{enumerate}
 \item $G$ is strongly connected.
 \item For any vertex $v\in V(G)$ and color $i\in [c]$, if $d^{-}_{i,G}(v) > 0$, then there exists $j\neq i\in[c]$, such that $d^{+}_{j,G}(v) > 0$. Similarly if $d^{+}_{i,G}(v) > 0$, then there exists $j\neq i\in[c]$, such that $d^{-}_{j,G}(v) > 0$.
 \end{enumerate}
\end{proposition}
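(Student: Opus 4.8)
The plan is to prove both statements directly from the definition of PC trail-connectivity, which guarantees for \emph{any} ordered pair of arcs $f_1, f_2$ a PC trail that starts with $f_1$ and ends with $f_2$.

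For part 1, I would show strong connectivity by picking arbitrary distinct vertices $u, v \in V(G)$ and producing a (directed) path from $u$ to $v$. Since $G$ has at least two arcs and, by a preliminary observation, $G$ must be connected (if there were an isolated vertex or a second component, no PC trail could connect arcs across components), every vertex is incident with some arc. Choose an outgoing arc $f_1$ at $u$ if one exists; otherwise I will need to route around $u$ using an incoming arc, so let me instead argue as follows. Fix any arc $f_1$ and any arc $f_2$ with $v$ as its head (such $f_2$ exists once we know $v$ has positive in-degree, which I will establish alongside connectivity). The PC trail $T$ from $f_1$ to $f_2$ is in particular a directed walk, and by taking the subwalk of $T$ from $u$ to $v$ — more carefully, by concatenating trails — I obtain a directed $u$-$v$ walk, hence a directed $u$-$v$ path. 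The cleanest route is: for any two vertices $u,v$, select an arc $f_1$ leaving $u$ and an arc $f_2$ entering $v$; the guaranteed PC trail from $f_1$ to $f_2$ is a directed walk beginning at $u$ and ending at $v$, which contains a $u$-$v$ path. Symmetrically one gets a $v$-$u$ path, establishing strong connectivity. The main subtlety here is confirming that every vertex has both positive in-degree and positive out-degree, so that the arcs $f_1, f_2$ with the right endpoints actually exist; this should follow from connectivity together with the trail-connectivity hypothesis, and part 2 will in fact make this transparent.

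For part 2, suppose $d^{-}_{i,G}(v) > 0$, witnessed by an incoming arc $f_1 = uv$ of color $i$. I want to find an outgoing arc at $v$ whose color differs from $i$. Since $G$ has at least two arcs and is PC trail-connected, pick any arc $f_2 \neq f_1$ and consider a PC trail $T$ starting with $f_1$. The trail $T$ begins $u\,v\,w\ldots$, so it must use a second arc $vw$ leaving $v$, and by the proper-coloring condition $\phi(vw) \neq \phi(uv) = i$; setting $j = \phi(vw)$ gives $d^{+}_{j,G}(v) > 0$ with $j \neq i$. I must only ensure that such a trail $T$ genuinely has length at least two, i.e.\ that $v$ is not forced to be the terminus immediately — but since $f_2$ can be chosen distinct from $f_1$ (using $|A(G)| \geq 2$) and the trail must reach $f_2$, it properly extends past $f_1$, so the required outgoing arc is present. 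The symmetric claim, starting from $d^{+}_{i,G}(v) > 0$, is obtained by instead choosing a PC trail that \emph{ends} with the given outgoing arc $vw$ of color $i$: its penultimate arc $uv$ enters $v$ with a color $j \neq i$, giving $d^{-}_{j,G}(v) > 0$.

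The step I expect to be the main obstacle is the bookkeeping around degenerate cases in part 1 — specifically guaranteeing the existence of the endpoint arcs $f_1, f_2$ needed to route a directed path between an arbitrary pair $u, v$, and handling vertices that might a priori have only incoming or only outgoing arcs. I would dispatch this by using part 2 first: part 2 shows that positive in-degree at $v$ forces positive out-degree (and vice versa), so combined with connectivity every vertex has both an incoming and an outgoing arc, which legitimizes the choices of $f_1$ and $f_2$ in the strong-connectivity argument. Thus the natural order is to prove part 2 first and invoke it within part 1.
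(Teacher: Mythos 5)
Your proof is correct and follows essentially the same route as the paper: part 2 is exactly the paper's argument (a PC trail from the given arc $uv$ to any \emph{other} arc, which exists since $|A(G)|\ge 2$, must continue past $v$ with an arc of a different color, and symmetrically for the incoming case), and part 1 uses the guaranteed PC trails as directed walks containing the required paths, which is the argument behind the paper's terse ``it is easy to see that $G$ is strongly connected.'' One shared caveat: your parenthetical claim that trail-connectivity rules out isolated vertices is not literally forced by the definition (an isolated vertex carries no arcs, so it imposes no condition), but the paper's own proof makes the same implicit assumption that every vertex is incident with an arc, so this is a defect of the statement's conventions rather than of your argument.
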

\begin{proof}
If $G$ is PC trail-connected, then there is a PC trail starting with arc $f_1$ and ending with arc $f_2$, for any pair of arcs  $f_1, f_2$ in $G$. It is easy to see that $G$ is strongly connected.

  Moreover, if there is an arc $uv$ in $G$ with $\phi(uv) = i$, there must be an arc $vw$ with $\phi(vw) = j \neq i$, otherwise, arc $uv$ cannot reach any other arc via a PC trail. Similarly, if there is an arc $vw$ in $G$ with $\phi(vw) = i$, there must be an arc $uv$ with $\phi(uv) = j\neq i$, otherwise, no arc can reach $vw$ via a PC trail.
\end{proof}

 We now prove a necessary and sufficient condition for a $c$-arc-colored digraph to have a solution for CPP-ACD, which is the reason we introduced the concept of PC trail-connectivity.
 
 \begin{lemma}\label{lemma: solutionCondition2}
 For a $c$-arc-colored digraph $G$, there is a feasible solution for CPP-ACD on $G$ if and only if $G$ is PC trail-connected.
 \end{lemma}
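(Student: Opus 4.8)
The plan is to prove the two implications separately, in each case building the desired object out of the other by elementary splicing operations that respect proper coloring.

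For the sufficiency direction (PC trail-connectivity $\Rightarrow$ feasibility), I would fix an arbitrary enumeration $f_1,f_2,\ldots,f_k$ of all arcs of $G$ and, reading indices cyclically (so $f_{k+1}=f_1$), invoke PC trail-connectivity to obtain for each $i$ a PC trail $T_i$ that starts with arc $f_i$ and ends with arc $f_{i+1}$. Since $f_i\neq f_{i+1}$ when $k\ge 2$, each $T_i$ has at least two arcs, so deleting its final arc leaves a nonempty walk $T_i'$ that still starts with $f_i$ and now ends at the tail of $f_{i+1}$. The arcs $f_{i+1}$ act as common hinges: $T_i'$ ends at $\mathrm{tail}(f_{i+1})$, which is exactly where $T_{i+1}'$ begins (with arc $f_{i+1}$), so $W=T_1'T_2'\cdots T_k'$ is a single closed walk. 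It is properly colored because each $T_i'$ is a subwalk of the PC trail $T_i$, and at every hinge the last arc of $T_i'$ (the penultimate arc of $T_i$) differs in color from $f_{i+1}$ by the proper coloring of $T_i$; the same observation applied at the wrap-around from $T_k'$ to $T_1'$ closes the cyclic color condition. Finally $W$ contains every arc, since $f_i$ is the first arc of $T_i'$, so $W$ is a feasible solution.

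For the necessity direction (feasibility $\Rightarrow$ PC trail-connectivity), I would take a feasible solution, i.e.\ a PC closed walk $W$ visiting every arc, and fix two arcs $f_1,f_2$ (the case $f_1=f_2$ being handled by the trivial one-arc trail). Viewing $W$ as a cyclic sequence of arcs that is properly colored including the wrap-around, I would read it starting at an occurrence of $f_1$ and stopping at the next occurrence of $f_2$; this cyclic segment is a PC walk from $f_1$ to $f_2$, but possibly with repeated arcs. The remaining task is to turn this PC walk into a PC \emph{trail} with the same first and last arcs.

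The main obstacle is precisely this last extraction, since the usual shortcutting of a walk can destroy proper coloring at the splice point. The key observation that makes it go through is to shortcut on a repeated \emph{arc} rather than a repeated vertex: if some arc $e$ is traversed twice, write the walk as $A\,e\,B\,e\,C$, where $B$ is the closed subwalk between the two occurrences, and delete one copy of $e$ together with $B$. Because the deleted and retained copies of $e$ have identical color, both surviving color transitions (from the end of $A$ into $e$, and from $e$ into the start of $C$) are inherited from transitions already present in the original PC walk, so proper coloring is preserved; choosing which copy of $e$ to delete lets me retain whichever of $f_1,f_2$ happens to coincide with $e$ at an endpoint. Iterating this deletion strictly decreases the number of arcs, so the process terminates in a PC trail from $f_1$ to $f_2$, which establishes PC trail-connectivity. (Degenerate digraphs with fewer than two arcs are trivial and can be dispatched directly.)
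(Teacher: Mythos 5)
Your proof is correct, and although it shares the paper's overall architecture (both directions proceed by splicing PC trails at common ``hinge'' arcs), it differs in two substantive details. In the sufficiency direction the paper builds the closed walk greedily: it chains PC trails $T_1,T_2,\ldots$, each routed to a not-yet-covered arc, and then must repair the closure by a case analysis --- if the resulting walk $W$ is not closed, or is closed with $\phi(f_1)=\phi(f_t)$, it invokes Proposition~\ref{proposition} to produce an arc $f$ entering the tail of $f_1$ with $\phi(f)\neq\phi(f_1)$ and appends one further trail from $f_t$ to $f$. Your fixed cyclic enumeration sidesteps all of this: by making the final trail end with the arc $f_1$ itself and then truncating every trail's last arc, the arc preceding $f_1$ in the closed walk automatically has color different from $\phi(f_1)$, so closure and the wrap-around color condition come for free, with no appeal to Proposition~\ref{proposition} and no case split; the only cost is a walk that may retraverse arcs more often than the paper's greedy construction, which is immaterial for feasibility. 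In the necessity direction your argument coincides with the paper's (read a cyclic segment of the solution walk from an occurrence of $f_1$ to the next occurrence of $f_2$), except that the paper dismisses the crucial last step with ``it is not hard to see that the PC walks of $G$ can be shortened to PC trails with the same first and last arcs,'' whereas you actually prove it --- and with the right idea: shortcutting at a repeated \emph{vertex} can destroy proper coloring, but shortcutting at a repeated \emph{arc} $e$ (deleting one copy of $e$ together with the closed subwalk between the two occurrences) preserves both surviving transitions, since each already occurs in the original PC walk; your freedom in choosing which copy of $e$ to delete is exactly what keeps $f_1$ and $f_2$ at the endpoints. So your write-up is, if anything, more complete than the paper's on this point.

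One small caveat: your closing remark that digraphs with fewer than two arcs are ``trivial'' is glib in the wrong direction. Under a literal reading of the definitions, a digraph with a single arc is PC trail-connected (take $f_1=f_2$ and the one-arc trail) yet admits no feasible solution, since a PC closed walk needs first and last arcs of distinct colors. The lemma therefore implicitly assumes at least two arcs (as does Proposition~\ref{proposition}); the paper shares this boundary convention, so it is a shared looseness rather than a gap specific to your proof.
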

 \begin{proof}
  On the one hand, if CPP-ACD has a solution on $G$, then there is a PC Euler directed multigraph $G'$, which is obtained by adding copies of some arcs in $G$. For any pair of arcs $f_1, f_2$ in $G$, since $G'$ has a PC Euler trail, $G'$ has a PC trail from $f_1$ to $f_2$ and a PC trail from $f_2$ to $f_1$. Note that a PC trail in $G'$ corresponds to a PC walk in $G$. Thus, $G$ has a PC walk from $f_1$ to $f_2$ and a PC walk from $f_2$ to $f_1$. It is not hard to see that the PC walks of $G$ can be shortened to PC trails with the same first and last arcs. Thus, $G$ is PC trail-connected.


 On the other hand, we need to prove that if $G$ is PC trail-connected, then there is a feasible solution for CPP-ACD on $G$. To show this, we explicitly construct a PC closed walk in $G$ that contains each arc of $G$ at least once.
 Let $f_1,f_2$ be distinct arcs of $G$. Since $G$ is trail-connected, there is a PC trail $T_1$ from $f_1$ to $f_2$. Let $f_3$ be an arc of $G$ not contained in $T_1$. There is a PC trail $T_2$ from $f_2$ to $f_3$ in $G$. Continue this way by choosing an arc $f_i$ not contained in $\bigcup_{j\in [i-2]}T_j$ and finding a PC trail $T_{i-1}$ from $f_{i-1}$ to $f_i$. We will end the procedure when all arcs of $G$ are covered by the PC trails. Suppose when the procedure stops $i=t$. 
 
 Note that $W=\bigcup_{i\in [t-1]}T_{i}\setminus \{f_j: 2\leq j \leq t-1\}$ is a trail starting with $f_1$ and ending with $f_t$ which is properly colored unless it is closed and $\phi(f_1)= \phi(f_t)$.
 If $W$ is closed and $\phi(f_1)\neq \phi(f_t)$, then $W$ is a PC closed walk of $G$ containing all arcs of $G$. Otherwise, there is an arc $f$ in $G$ whose head is the tail of $f_1$ and $\phi(f)\neq \phi(f_1)$ by  Proposition \ref{proposition}. Let $T_t$ be a PC trail from $f_t$ to $f$. Then $W'=W\cup T_t\setminus\{f_t\}$ is a PC closed walk in $G$ containing all arcs of $G$. 
  \end{proof}

\subsection{PC Euler 2-ACD}

In this section, our main result is the following theorem, which provides necessary and sufficient conditions for a 2-arc-colored digraph to be PC Euler, and gives a way to find a PC Euler trail if it exists.

\begin{theorem}\label{theorem:EulerCondition}
Let $G$ be a 2-arc-colored directed multigraph. Then $G$ is PC Euler if and only if $G$ is color-balanced and PC trail-connected. Moreover, a PC Euler trail of $G$ can be constructed in polynomial time if it exists.
\end{theorem}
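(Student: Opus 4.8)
The plan is to prove the two implications separately and then read off a polynomial-time algorithm from the harder (sufficiency) direction. For necessity, suppose $G$ has a PC Euler trail $T$. To recover PC trail-connectivity, for any two arcs $f_1,f_2$ I would cyclically rotate $T$ so that it begins with $f_1$ and read it off until $f_2$; the resulting initial segment inherits proper coloring from $T$ and is a PC trail from $f_1$ to $f_2$. To recover color-balance, I would examine $T$ locally at each vertex $v$: every passage of $T$ through $v$ is an incoming arc immediately followed by an outgoing arc of the other color, and since $T$ uses each arc exactly once these passages partition the arcs at $v$ into in/out pairs of opposite colors. Counting the pairs of each type yields $d^{+}_{i,G}(v)=d^{-}_{3-i,G}(v)$, so $v$ is color-balanced.

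For sufficiency I would proceed in two stages, decompose and then merge. First, using color-balance, at each vertex $v$ I pair every incoming arc of color $i$ with an outgoing arc of color $3-i$; this is possible because $d^{-}_{3-i,G}(v)=d^{+}_{i,G}(v)$ gives equal counts on the two sides. These local pairings define a transition system in which each arc has a unique successor, so they decompose $A(G)$ into arc-disjoint closed trails, each of which is PC because every transition changes color. (These closed PC trails are the PC circuits.) The objective is then to merge them all into a single PC circuit, which is precisely a PC Euler trail.

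The heart of the argument, and the step I expect to be the main obstacle, is the merge, where PC trail-connectivity rather than mere strong connectivity is essential. Suppose at least two circuits remain; pick arcs $f_1,f_2$ lying in distinct circuits $C_i,C_j$ and take a PC trail $T$ from $f_1$ to $f_2$, guaranteed by PC trail-connectivity. Since $T$ starts in $C_i$ and ends in $C_j$, two consecutive arcs $e,e'$ of $T$ meeting at some vertex $w$ must lie in different circuits $C_a\neq C_b$, with $\phi(e)\neq\phi(e')$ as $T$ is PC; thus $\phi(e')=3-\phi(e)$. Inside $C_a$ the arc $e$ (entering $w$) is followed by an out-arc $g$ with $\phi(g)=3-\phi(e)$, and inside $C_b$ the arc $e'$ (leaving $w$) is preceded by an in-arc $h$ with $\phi(h)=3-\phi(e')=\phi(e)$. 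Hence both $C_a$ and $C_b$ enter $w$ with color $\phi(e)$ and leave with color $3-\phi(e)$, so redefining the successor of $e$ to be $e'$ and the successor of $h$ to be $g$ keeps both new transitions properly colored and splices $C_a$ and $C_b$ into a single PC circuit, reducing the circuit count by one. The subtlety to nail down here is verifying both that this re-pairing preserves proper coloring at $w$ and that it genuinely connects the two previously separate closed trails into one (tracing the merged trail from $e$ around $C_b$ and then around $C_a$ confirms this), and that a mergeable pair can never fail to exist while more than one circuit remains. I would underscore that this is where the hypothesis bites: two PC $2$-cycles sharing exactly one vertex can be color-balanced and strongly connected yet not PC trail-connected and not mergeable, so PC trail-connectivity is doing real work.

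Finally, for the polynomial bound, the decomposition is linear in $|A(G)|$, there are at most $|A(G)|$ circuits, and hence at most $|A(G)|-1$ merges, each producing a valid PC circuit decomposition with one fewer circuit. Crucially, the existence argument above lets me realize each merge by a direct polynomial search, scanning vertices for two distinct circuits that enter and leave with the same color pattern, without ever recomputing a PC trail; I invoke PC trail-connectivity only to certify that this search always succeeds as long as more than one circuit remains. Iterating until a single circuit survives yields the PC Euler trail in polynomial time.
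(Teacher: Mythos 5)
Your proof is correct, and it shares the paper's central insight while packaging it quite differently, so a comparison is worthwhile. The key step in your merge---taking a PC trail $T$ between arcs lying in two distinct circuits and applying pigeonhole to find consecutive arcs $e,e'$ of $T$ in different circuits, whose common vertex $w$ then carries same-type transitions in both circuits and hence admits a color-compatible splice---is precisely the argument the paper uses inside its Lemma \ref{lemma: goodDecomposition} (there it shows the head of $f_k$ is a ``good vertex'' shared by $C_{r_1}$ and $C_{r_2}$, which is exactly your condition that both circuits enter $w$ with color $\phi(e)$ and leave with color $3-\phi(e)$). The differences: (i) your decomposition comes from a transition system (pair each in-arc of color $i$ at $v$ with an out-arc of color $3-i$, which color-balance permits), whereas the paper greedily extracts minimal PC Euler subgraphs and proves these satisfy $d^{+}_{i,C}(v)\le 1$; your closed trails need not satisfy that degree bound, but nothing in your argument uses it, so this is harmless. (ii) Your merge is iterative Hierholzer-style splicing: swap the successors of $e$ and $h$, reducing the circuit count by one, and re-invoke PC trail-connectivity (a property of $G$, unaffected by re-pairing) to certify that a mergeable pair exists at the next round. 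The paper instead merges in one shot: it introduces good/bad circuits, shows PC trail-connectivity forbids bad circuits (Lemma \ref{lemma: noBad}), builds an auxiliary graph $H$ whose vertices are the circuits and whose edges record shared good vertices, proves $H$ is connected, and inserts the circuits' Euler trails into one another along a DFS tree of $H$. Your route is leaner---it dispenses with the good/bad machinery and the auxiliary-graph connectivity argument---at the cost of invoking trail-connectivity once per merge rather than once overall; both yield the claimed polynomial bound, and your observation that the merge can be realized by a direct scan for same-type transitions in distinct circuits (using trail-connectivity only as a certificate of success) keeps the algorithm free of any shortest-PC-trail computations, which the paper's proof of this theorem also avoids.
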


It is easy to decide whether $G$ is color-balanced. Corollary \ref{corollary: sufficientCondition} proved in the next subsection, allows us to decide whether $G$ is PC trail-connected.

It is not clear how to prove Theorem \ref{theorem:EulerCondition} by induction, as it is possible that the remaining part of $G$ after deleting some PC Euler subgraphs is not PC trail-connected.

We introduce the following notion of PC circuits, which is a counter part of cycles in undirected and directed graphs.

\begin{definition}
A subgraph $C$ of $G$ is called a PC circuit, if it is PC Euler and $d^{+}_{1,C}(v) \leq 1$, $d^{+}_{2,C}(v) \leq 1$ hold for any vertex $v \in V(C)$.
\end{definition}

\begin{lemma}
Let $G$ be a 2-arc-colored directed multigraph. If $C$ is a minimal PC Euler subgraph of $G$, then for any vertex $v\in V(C)$, $d^{+}_{1,C}(v) \leq 1$ and $d^{+}_{2,C}(v) \leq 1$.
\end{lemma}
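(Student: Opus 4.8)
The plan is to argue by contradiction using a local ``transition-swapping'' operation on a PC Euler trail of $C$. Suppose, for a contradiction, that some vertex $v \in V(C)$ has $d^{+}_{1,C}(v) \ge 2$ (the case $d^{+}_{2,C}(v)\ge 2$ is symmetric). Fix a PC Euler trail $T$ of $C$, viewed as a cyclic sequence of its arcs. Each appearance of $v$ inside $T$ determines a \emph{transition}: an ordered pair $(a,b)$ where $a$ is the arc entering $v$ and $b$ the arc leaving $v$ immediately afterwards. Since $T$ is properly colored, every transition at $v$ satisfies $\phi(a)\neq\phi(b)$; as there are only two colors, a transition whose outgoing arc has color $1$ must have an incoming arc of color $2$.

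First I would extract the two relevant transitions. Because $d^{+}_{1,C}(v)\ge 2$, the vertex $v$ is left by at least two distinct color-$1$ arcs, so $T$ contains two transitions $(a_1,b_1)$ and $(a_2,b_2)$ with $\phi(b_1)=\phi(b_2)=1$; by the previous remark $\phi(a_1)=\phi(a_2)=2$. Writing the cyclic order of $T$ as $b_1\,P\,a_2\,b_2\,Q\,a_1$ (where $P,Q$ are the intermediate arc-segments), I would then perform the swap that replaces the transitions $(a_1,b_1),(a_2,b_2)$ by $(a_2,b_1),(a_1,b_2)$. This is the crux: since $(a_1,b_1)$ and $(a_2,b_2)$ both lie on the single closed trail $T$, swapping them splits $T$ into exactly two closed walks, namely $C_1 = b_1\,P\,a_2$ (closed by the new transition $a_2\to b_1$) and $C_2 = b_2\,Q\,a_1$ (closed by $a_1\to b_2$), and these partition the arc set of $C$.

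It then remains to check that $C_1$ and $C_2$ are genuine PC Euler subgraphs strictly smaller than $C$. Each is a closed trail because it uses each of its arcs exactly once (inherited from $T$); all internal transitions are inherited from $T$ and hence properly colored, while the two newly created transitions $a_2\to b_1$ and $a_1\to b_2$ both go from a color-$2$ arc to a color-$1$ arc and are therefore proper as well. Since $\phi(b_1)=1\neq 2=\phi(a_2)$ we have $b_1\neq a_2$, so $C_1$ contains at least two arcs, and symmetrically so does $C_2$; in particular both are nonempty, so each is a proper subgraph of $C$. Thus $C_1$ is a PC Euler subgraph of $G$ properly contained in $C$, contradicting the minimality of $C$. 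Hence $d^{+}_{1,C}(v)\le 1$, and by the color-symmetric argument $d^{+}_{2,C}(v)\le 1$.

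The main obstacle is the combinatorial justification that the transition swap \emph{splits} $T$ rather than merging two trails; this relies on both chosen transitions belonging to the same closed Euler trail, which is guaranteed because $C$ is PC Euler (a single closed trail). I would be careful to phrase the swap at the level of the cyclic arc sequence so that it also covers degenerate cases, such as a loop at $v$ or an empty segment $P$ or $Q$. Everything else, namely the proper coloring of the two new transitions and the nonemptiness of the pieces, follows immediately from the two-color constraint.
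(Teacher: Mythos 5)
Your proof is correct and takes essentially the same route as the paper: the paper's argument simply observes that a PC trail starting at $v$ with a color-$1$ outgoing arc and ending at $v$ with a color-$2$ incoming arc --- exactly your segment $C_1 = b_1\,P\,a_2$ cut out of the Euler trail --- is a PC Euler subgraph properly contained in $C$ (it misses $b_2$), contradicting minimality. Your transition-swap framing merely adds the complementary piece $C_2$ and some extra bookkeeping around the same key object and the same contradiction.
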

\begin{proof}
 Suppose on the contrary, there is a vertex $v\in V(C)$ with $d^{+}_{1,C}(v) \geq 2$. Observe that a PC trail starting at $v$ with an outgoing arc colored 1 and ending at $v$ with an incoming arc colored 2 is a PC Euler subgraph.
 Consequently, there is a proper subgraph of $C$ which is PC Euler, a contradiction to the assumption that $C$ is a minimal PC Euler subgraph.
\end{proof}

\begin{corollary}\label{corollary}
Let $G$ be a 2-arc colored directed multigraph. If $C$ is a minimal PC Euler subgraph of $G$, then $C$ is a PC circuit.
\end{corollary}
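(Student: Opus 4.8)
The plan is to obtain this Corollary as an immediate consequence of the preceding Lemma together with the definition of a PC circuit, so essentially no new argument is required. Recall that the definition imposes exactly two conditions on a subgraph $C$ for it to be a PC circuit: that $C$ be PC Euler, and that $d^{+}_{1,C}(v) \leq 1$ and $d^{+}_{2,C}(v) \leq 1$ hold for every vertex $v \in V(C)$. I would simply verify each of these two conditions in turn for a minimal PC Euler subgraph $C$.

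First, since $C$ is by hypothesis a PC Euler subgraph of $G$, the first defining condition holds trivially. Second, the preceding Lemma asserts precisely that a minimal PC Euler subgraph satisfies $d^{+}_{1,C}(v) \leq 1$ and $d^{+}_{2,C}(v) \leq 1$ for every $v \in V(C)$, which is the second defining condition. Combining the two observations, $C$ meets both requirements of the definition and is therefore a PC circuit.

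There is no genuine obstacle in the Corollary itself; all the substantive work has already been done in the Lemma. The only nontrivial point — handled in the Lemma's proof — is that a vertex $v$ with two outgoing arcs of a common color would let one split off a strictly smaller PC Euler subgraph (a PC trail leaving $v$ on one color and returning to $v$ on the other color forms a PC Euler subgraph properly contained in $C$), contradicting the minimality of $C$. Once that degree bound is in hand, the present statement is a one-line deduction from the definitions.
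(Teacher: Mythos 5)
Your proposal is correct and matches the paper exactly: the corollary is an immediate consequence of the preceding lemma (which supplies the degree bounds $d^{+}_{1,C}(v)\leq 1$ and $d^{+}_{2,C}(v)\leq 1$) combined with the hypothesis that $C$ is PC Euler, which together are precisely the definition of a PC circuit. The paper treats this as a one-line deduction, just as you do.
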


\begin{definition}
A PC circuit $C$ is {\em bad} in $G$, if there is a connected component $D$ in $G-C$ such that $\max\{d^{+}_{1, C\cup D}(v),d^{+}_{2, C\cup D}(v)\}=1$, for any vertex $v \in V(C)\cap V(D)$. We say a PC circuit is {\em good} in $G$ if it is not bad in $G$.
\end{definition}


\begin{lemma}\label{lemma: noBad}
Let $G$ be a 2-arc-colored directed multigraph. If $G$ is PC trail-connected, then there is no PC circuit which is bad in $G$.
\end{lemma}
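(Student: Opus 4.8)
The plan is to prove the contrapositive: assuming $G$ contains a bad PC circuit $C$ witnessed by a component $D$ of $G-C$, I will exhibit two arcs with no PC trail between them, so $G$ is not PC trail-connected. I may assume $D$ contains at least one arc, since an arc-less component carries no real content for the badness condition. Concretely, I will take $f_1$ to be any arc of $C$ and $f_2$ any arc of $D$, and show that no PC trail runs from $f_1$ to $f_2$.

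The first step is a local observation at the junction vertices $J:=V(C)\cap V(D)$. Since $D$ is the connected component of $G-C$ containing each such vertex $v$, every arc of $G$ incident to $v$ lies in $A(C)\cup A(D)=A(C\cup D)$; hence $d^{+}_{j,G}(v)=d^{+}_{j,C\cup D}(v)$ for $j\in[2]$, and the badness condition upgrades to the \emph{global} bound $d^{+}_{1,G}(v)\le 1$ and $d^{+}_{2,G}(v)\le 1$. This is the key leverage: at every junction, the outgoing arc of each color is unique in all of $G$.

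Next I will record that $C$, being PC Euler, is color-balanced: traversing a PC Euler trail of $C$ pairs each incoming arc of color $i$ at a vertex with an outgoing arc of color $3-i$, so $d^{-}_{i,C}(v)=d^{+}_{3-i,C}(v)$ for every $v\in V(C)$. Combining this with the junction bound yields the crucial non-crossing claim: if a PC trail arrives at a junction $v$ along an arc $e'\in A(C)$ of color $i$, it must leave along an arc of color $3-i$; but color-balance gives $d^{+}_{3-i,C}(v)=d^{-}_{i,C}(v)\ge 1$, while the global bound gives $d^{+}_{3-i,G}(v)\le 1$, so the unique color-$(3-i)$ out-arc at $v$ belongs to $C$. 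Thus a PC trail can never switch from $C$ into $D$ at a junction.

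Finally I will propagate this locally-forced behaviour along an entire trail. Consider the first arc of a PC trail lying in $A(D)$, if any: its predecessor ends at the tail of that arc, a vertex in $V(D)\cap V(C)=J$, and lies outside $A(D)$, hence in $A(C)$; the non-crossing claim then forces the $D$-arc itself to lie in $A(C)$, contradicting $A(C)\cap A(D)=\emptyset$. Therefore any PC trail that meets $A(D)$ must \emph{begin} with an arc of $A(D)$. Since a trail starting from $f_1\in A(C)$ begins with $f_1\notin A(D)$, it can never reach $f_2\in A(D)$, which completes the contrapositive. I expect the junction step to be the main obstacle: the point is to recognise that ``$D$ is the whole $G-C$-component of $v$'' is exactly what converts the local degree hypothesis into global uniqueness of the coloured out-arcs, and it is this uniqueness that rigidifies the coloured trails. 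The remaining steps are essentially bookkeeping, apart from the minor care needed to dismiss arc-less components $D$.
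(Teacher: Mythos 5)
Your proposal is correct and follows essentially the same route as the paper's proof: the badness degree bound at junction vertices, combined with color-balance of the PC Euler circuit $C$, forbids any PC trail from switching from $C$ into $D$, so no PC trail joins an arc of $C$ to an arc of $D$, contradicting PC trail-connectivity. If anything, your write-up is slightly tighter than the paper's, which gratuitously invokes color-balance of $G$ and $D$ (not part of the lemma's hypothesis) where only color-balance of $C$ is needed, and which leaves implicit both your ``first arc in $A(D)$'' propagation step and the dismissal of arc-less components of $G-C$.
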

\begin{proof}

Suppose on the contrary, there is a PC circuit $C$ which is bad in $G$. As $G$ is PC trail-connected and thus strongly connected by Proposition \ref{proposition}, there must be a vertex $v\in V(C)$ such that $d_{1,G}^{+}(v)+d_{2,G}^{+}(v)>2$. Otherwise $G=C$ is simply a PC circuit, in which case, $C$ can not be bad in $G$, a contradiction.

As $C$ is bad in $G$, then by definition, there is a connected component $D$ in $G-C$ such that for any vertex $v \in V(C)\cap V(D)$, $\max\{d^{+}_{1, C\cup D}(v),d^{+}_{2, C\cup D}(v)\}=1$. Consequently, there is no PC trail in $G$ which switches from $C$ to $D$ via $v$, since such a PC trail implies that there are two arcs $uv\in A(C)$ and $vw\in A(D)$ such that $\phi(uv)\neq \phi(vw)$. As $G$ and $C$ are both color-balanced, $D$ is also color-balanced. The fact that both $C$ and $D$ are color-balanced and $\phi(uv)\neq \phi(vw)$ implies that $d^{+}_{1, C\cup D}(v)\geq 2$ or $d^{+}_{2, C\cup D}(v)\geq 2$, which is not possible since $\max\{d^{+}_{1, C\cup D}(v),d^{+}_{2, C\cup D}(v)\}=1$. Hence for any two arcs $u_1v_1\in A(C)$ and $u_2v_2\in A(D)$, there is no PC trail starting with arc $u_1v_1$ and ending with arc $u_2v_2$ in $G$, which is a contradiction to the fact that $G$ is PC trail-connected.
\end{proof}

\begin{lemma}\label{lemma: sufficiency}
Let $G$ be a 2-arc-colored directed multigraph. If $G$ is color-balanced, then $G$ can be decomposed into a set of PC circuits in polynomial time.
\end{lemma}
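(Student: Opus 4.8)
The plan is to prove the lemma by repeatedly extracting PC circuits from $G$ until no arc remains. Concretely, I would run the following procedure: while $A(G)\neq\emptyset$, find a minimal PC Euler subgraph $C$ of $G$ (existence is argued below), record it, and replace $G$ by $G-C$. By Corollary \ref{corollary} every minimal PC Euler subgraph is a PC circuit, so this yields a partition of $A(G)$ into PC circuits. Thus the whole proof reduces to two claims: (i) every color-balanced $2$-arc-colored multigraph with at least one arc contains a PC Euler subgraph, and (ii) removing a PC circuit from a color-balanced multigraph leaves a color-balanced multigraph, so the procedure can be iterated.

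Claim (ii) is the routine half. Since a PC circuit $C$ is in particular PC Euler, it is color-balanced, i.e.\ $d^{+}_{i,C}(v)=d^{-}_{3-i,C}(v)$ for every vertex $v$ and $i\in[2]$; as $C$ is a subgraph of $G$, subtracting these equalities from the corresponding equalities for $G$ gives $d^{+}_{i,G-C}(v)=d^{-}_{3-i,G-C}(v)$, with all degrees nonnegative. Hence $G-C$ is again color-balanced. Because a PC circuit has at least two arcs, each iteration deletes at least two arcs, so the procedure stops after at most $|A(G)|/2$ rounds.

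The crux is Claim (i), the existence of a PC Euler subgraph, which I would obtain as a maximal PC trail. Start with an arbitrary arc $a_1$ and extend a PC trail $T=v_0v_1\cdots v_k$ (with $a_i=v_{i-1}v_i$) greedily: while the current end-vertex has an unused outgoing arc whose color differs from that of the arc just traversed, append it. When $T$ can no longer be extended, I claim $T$ is automatically a closed PC trail, hence a PC Euler subgraph on its arc set. To see this, set $c=\phi(a_k)$ and count, at the end-vertex $v_k$, the arcs of $T$ incident with $v_k$. Every interior visit to $v_k$ uses one incoming arc and one outgoing arc of opposite colors, so the number of used incoming arcs of color $c$ equals the number of used outgoing arcs of color $3-c$ up to two boundary corrections: the final arrival $a_k$ adds one incoming arc of color $c$, and, if $v_0=v_k$, the initial departure $a_1$ may add one outgoing arc of color $3-c$. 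Non-extendability means all $d^{+}_{3-c,G}(v_k)$ outgoing arcs of color $3-c$ at $v_k$ are used; combining this with color-balance $d^{+}_{3-c,G}(v_k)=d^{-}_{c,G}(v_k)$ and the fact that the used incoming arcs of color $c$ cannot exceed $d^{-}_{c,G}(v_k)$ forces the boundary correction $\varepsilon$ to equal $1$, i.e.\ $v_0=v_k$ and $\phi(a_1)=3-c\neq\phi(a_k)$. Thus $T$ is closed and properly colored at the wrap-around, so $T$ is a PC Euler subgraph. I expect this degree-counting step to be the main obstacle, as it is where color-balance is used in an essential and slightly delicate way.

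Finally, to turn the existence result into an algorithm, I would first build such a maximal PC trail $W$ in $O(|A(G)|)$ time, and then shrink it to a minimal PC Euler subgraph: as long as some vertex $v$ satisfies $d^{+}_{1,W}(v)\geq2$ or $d^{+}_{2,W}(v)\geq2$, the lemma preceding Corollary \ref{corollary} shows $W$ is not minimal, and its proof yields a strictly smaller PC Euler subgraph that can be found in polynomial time; since each reduction removes at least one arc, a minimal subgraph---a PC circuit by Corollary \ref{corollary}---is reached after polynomially many steps. Iterating the extraction over $G-C$ then produces the desired decomposition in polynomial time.
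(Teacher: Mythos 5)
Your proposal is correct and follows essentially the same route as the paper: greedily grow a PC trail, use color-balance to argue the trail closes up into a PC closed trail, extract a PC circuit, and iterate on the remainder, which stays color-balanced because every PC Euler subgraph is color-balanced. The only mechanical difference is that the paper halts the trail extension at the first moment a PC Euler subgraph appears (extracting a minimal one directly), whereas you run the trail to maximality and then shrink it via the lemma preceding Corollary~\ref{corollary}; your degree-counting step in fact makes explicit a point the paper leaves implicit, namely that the greedy extension cannot get stuck before a properly closed trail forms.
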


\begin{proof}
Let $G$ be a 2-arc-colored directed multigraph which is color-balanced. We will give a polynomial-time algorithm to decompose $G$ into minimal PC Euler subgraphs, which are PC circuits by Corollary  \ref{corollary}.

Initially label all arcs in $G$ ``non-traversed''. We find a minimal PC Euler subgraph $C$ of $G$ in the following way. Start by setting $R=uv$, which is some non-traversed arc. Then keep adding arcs into $R$ along a PC trail using only the ``non-traversed'' arcs. The procedure stops when the first time $R$ contains a PC Euler subgraph, and then we denote a PC Euler subgraph in $R$ which contains the last added arc by $C$.  We then change the labels of all arcs in $C$ to ``traversed''.

Note that we can check whether $R$ contains a PC Euler subgraph in polynomial time. Each time we add an arc $xy$ into $R$ such that vertex $y$ is already in $V(R)$, let us check whether $y$ satisfies the following condition 

\begin{equation}\label{equation0}
\{\phi(xy)\}\cup \phi^{+}_{R}(y) = \{1,2\}
\end{equation}

If (\ref{equation0}) does not hold for $y$, then there is no PC Euler subgraph in $R$.
If (\ref{equation0}) holds for $y$, then $R$ contains a PC Euler subgraph, as $R$ contains a PC closed trail starting and ending at $y$. Note that the PC trail starting at $y$ with color $3-\phi(xy)$ and ending at $y$ with arc $xy$ induces a minimal PC Euler subgraph.

As $C$ is a PC Euler subgraph of $G$, the subgraph of $G$ induced by all the non-traversed arcs remains color-balanced.
Consequently, the above procedure of finding minimal PC Euler ``non-traversed'' subgraphs continues, until there is no non-traversed arc in $G$. In this way, we decompose $G$ into minimal PC Euler subgraphs, which are PC circuits.
\end{proof}

Let $G$ be a 2-arc-colored digraph which is PC trail-connected. For two PC circuits $C_1$ and $C_2$ in $G$, such that $V(C_1) \cap V(C_2) \neq \emptyset$, we say a vertex $v\in V(C_1)\cap V(C_2)$ is \textit{good} if $d^{+}_{1, C_1 \cup C_2}(v)\geq 2$ or $d^{+}_{2, C_1\cup C_2}(v) \geq 2$.

\smallskip
The following lemma shows that $G$ is PC Euler if $G$ can be decomposed into PC circuits that are good in $G$. This is the reason we care about whether a PC circuit is good or bad in $G$.

\begin{lemma}\label{lemma: goodDecomposition}
Let $G$ be a 2-arc-colored directed multigraph which is PC trail-connected. Suppose that $G$ can be decomposed into a set of PC circuits, i.e., $G = C_1 \cup C_2 \cup \ldots \cup C_s$ for some positive integer $s$. If for each $i\in [s]$, $C_i$ is a good PC circuit in $G$, then $G$ is PC Euler.
\end{lemma}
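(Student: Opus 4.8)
The plan is to build a single PC Euler trail by merging the circuits $C_1,\dots,C_s$ one at a time into a growing PC Euler subgraph. The driving operation is a \emph{splice}: given two arc-disjoint PC Euler subgraphs $H_1,H_2$ that share a \emph{good} vertex $v$ (so that, after a suitable choice of color $c$, we have $d^{+}_{c,H_1\cup H_2}(v)\ge 2$), I will reroute their two closed PC trails into one. The point of the good-vertex condition is that $d^{+}_{c,H_1}(v)\ge 1$ forces, via color-balance, an incoming arc of color $3-c$ at $v$ in $H_1$, and likewise in $H_2$; hence both $H_1$ and $H_2$ contain, somewhere along their Euler trails, a transition of type ``in-$(3-c)$ then out-$c$'' at $v$. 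I then cut $H_1$'s closed trail at such a transition and insert all of $H_2$, starting at $H_2$'s out-$c$ arc and returning along $H_2$'s in-$(3-c)$ arc. Both new junctions are of the form $(3-c)\to c$ and so are properly colored, and the result is a closed PC trail covering $A(H_1)\cup A(H_2)$, i.e. a larger PC Euler subgraph.

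With the splice in hand, I would argue as follows. Note first that $G$ is color-balanced, being a union of the color-balanced circuits $C_i$, and that $G$ is strongly connected by Proposition~\ref{proposition}. Consider all subgraphs of $G$ that are a union of some of the $C_i$ and are PC Euler; this family is nonempty since each single circuit $C_i$ is PC Euler. Let $H=\bigcup_{i\in I}C_i$ be a member with the largest number of arcs, and suppose for contradiction that $H\neq G$. Write $Q=G-H$ for the remaining arcs, a nonempty union of the circuits $C_j$ with $j\notin I$.

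The key step is to show that $H$ can always be enlarged, contradicting maximality. Since $G$ is PC trail-connected, for any arc $f_1\in A(H)$ and $f_2\in A(Q)$ there is a PC trail from $f_1$ to $f_2$; as this trail begins in $H$ and ends in $Q$, it must \emph{switch} from $H$ to $Q$ at some vertex $v$, i.e. it uses an incoming arc of $H$ of some color $\alpha$ immediately followed by an outgoing arc of $Q$ of color $\beta=3-\alpha$. Thus $d^{-}_{\alpha,H}(v)\ge 1$, which by color-balance of $H$ gives $d^{+}_{\beta,H}(v)=d^{-}_{\alpha,H}(v)\ge 1$; moreover the outgoing arc of $Q$ lies in some circuit $C_j\subseteq Q$ with $d^{+}_{\beta,C_j}(v)\ge 1$. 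Hence $d^{+}_{\beta,H\cup C_j}(v)\ge 2$, so $v$ is a good vertex for $H$ and $C_j$, and splicing produces a strictly larger PC Euler union of circuits. This contradiction forces $H=G$, proving that $G$ is PC Euler. (The goodness hypothesis on the $C_i$ is consistent with this, since by Lemma~\ref{lemma: noBad} PC trail-connectivity already precludes bad circuits; the good vertex produced above is exactly the junction goodness guarantees.)

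I expect the main obstacle to be the careful verification that the splice genuinely yields a PC \emph{closed} trail: one must check both freshly created junctions are properly colored (this is where the restriction to two colors, via $\{\alpha,\beta\}=\{1,2\}$, is essential, since it makes the out-colors of the two spliced transitions agree), that the closure condition of the combined trail is inherited, and that when several splices occur at the same vertex $v$ the transitions used remain distinct so that the count $d^{+}_{c,G}(v)$ of available transitions is never exhausted. The second delicate point is the color-balance bookkeeping in the switching step, showing that a mere switch from $H$ to $Q$ (a difference of incident colors) upgrades to a genuine good vertex (two outgoing arcs of the \emph{same} color) for $H$ together with a single circuit $C_j$; both points are routine but must be stated precisely.
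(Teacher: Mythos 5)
Your proof is correct, and it organizes the argument differently from the paper while relying on the same two local ingredients. The paper builds an auxiliary intersection graph on the circuits (an edge whenever two circuits share a good vertex), proves that graph connected by exactly your switching computation (a PC switch between color-balanced pieces forces two outgoing arcs of one color at the switch vertex), and then assembles the Euler trail along a DFS tree, inserting each circuit's trail behind an incoming arc at a good vertex --- which is your splice in disguise. You replace the auxiliary graph and the tree bookkeeping with an extremal argument: a maximal PC Euler union of circuits must, by PC trail-connectivity, admit a switch into its complement, and color-balance of the union upgrades that switch to a spliceable good vertex, contradicting maximality. This is leaner --- you never need global connectivity of the intersection graph, only that the frontier of a proper sub-union can be crossed --- and, as you observe, it never invokes the hypothesis that each $C_i$ is good in $G$ (that hypothesis is anyway implied by PC trail-connectivity via Lemma~\ref{lemma: noBad}). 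What the paper's version buys is an explicitly constructive procedure, which Theorem~\ref{theorem:EulerCondition} needs for its polynomial-time claim; your argument becomes equally algorithmic if run greedily (scan vertices of the current union $H$ for a color $c$ and an unmerged circuit $C_j$ with $d^{+}_{c,H}(v)\geq 1$ and $d^{+}_{c,C_j}(v)\geq 1$; your maximality argument shows the scan cannot fail before $H=G$), and this variant does not even require computing PC trails. Two minor repairs: your parenthetical formulation of the good-vertex condition as $d^{+}_{c,H_1\cup H_2}(v)\geq 2$ is too weak once $H_1$ is a general PC Euler subgraph rather than a circuit, since both outgoing arcs could lie in $H_1$; what you actually use, and what the switching step delivers, is $d^{+}_{c,H_1}(v)\geq 1$ together with $d^{+}_{c,H_2}(v)\geq 1$, so state it that way. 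Also, your worry about exhausting transitions when several splices occur at the same vertex dissolves in your formulation: each splice is applied to the current closed trail and needs only one in-$(3-c)$/out-$c$ transition at $v$, whose existence follows afresh from $d^{-}_{3-c}(v)\geq 1$, two-colored properness, and the wrap-around junction guaranteed by the closed-trail condition.
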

\begin{proof}

We construct an auxiliary graph $H$, such that $V(H)=\{v_i| i\in [s]\}$, and there is a one to one correspondence between vertex $v_i\in V(H)$ and the good PC circuit $C_i$ in $G$ for each $i\in[s]$. We add an edge between two vertices $v_i, v_j$ in $H$ if and only if there exists a good vertex  $y\in V(C_i)\cap V(C_j)$. By the definition of PC circuits, $\max\{ d^{+}_{1, C_i}(y),$ $ d^{+}_{2, C_i}(y), d^{+}_{1, C_j}(y), d^{+}_{2, C_j}(y)\}\leq 1$, therefore $d^{+}_{1, C_i}(y)= 1$, $d^{+}_{1, C_j}(y)= 1$ or $d^{+}_{2, C_i}(y) = 1 $, $d^{+}_{2, C_j}(y) = 1 $.  In both cases, we can switch from $C_i$ to $C_j$ via the vertex $y$, and switch back to $C_i$ via $y$ after a PC Euler trail of $C_j$, and vice versa.

Now, to see that $G$ is PC Euler, we first prove that graph $H$ is connected.
For any two arcs $f_1, f_t$ where $f_1\in A(C_i)$ and $f_t\in A(C_j)$, there is a PC trail $T$ from $f_1$ to $f_t$, as $G$ is PC trail-connected. Let $T = f_1f_2\ldots f_{t-1}f_t$. For any two consecutive arcs $f_{k}$ and $f_{k+1}$ in $T$ where $f_k\in A(C_{r_1})$ and $f_{k+1}\in A(C_{r_2})$, $r_1\neq r_2 \in[s]$, we know that $\phi(f_k) \neq \phi(f_{k+1})$. Moreover, as both $C_{r_1}$ and $C_{r_2}$ are color-balanced, it follows that the head of $f_k$ is a good vertex in $V(C_{r_1})\cap V(C_{r_2})$. Consequently $v_{r_1}v_{r_2}\in E(H)$ by the construction of $H$. The fact that $T$ starts with arc $f_1$ in $C_i$ and ends with arc $f_t$ in $C_j$ implies there is a walk between $v_i$ and $v_j$ in $H$. The two arcs $f_1$ and $f_t$ are arbitrarily chosen, thus there is a walk between any two vertices in  $H$, and so $H$ is a connected graph.

We construct a PC Euler trail $T_{G}$ of $G$ in the following way. Consider a DFS on $H$ starting at vertex $v_1\in V(H)$. Since $H$ is a connected undirected graph, the predecessor subgraph of the DFS on $H$ is a single depth-first tree $T$. Set initially $T_{G} = T_{1}$, which is a PC Euler trail of $C_{1}$. Label $v_1$ ``visited'' and all other vertices in $H$ ``non-visited''.

We complete the PC Euler trail $T_G$ of $G$ by inserting a PC Euler trail of $C_{i}$ into $T_{G}$ for each $v_i\in V(H)$. We do the insertions following the tree edges in $T$. In each step, we look at an edge in $T$ which has an ``non-visited'' endvertex and a ``visited'' endvertex. Let $v_iv_j$ be such an tree edge in $T$ where $v_i$ is labelled ``visited'' and $v_j$ is labelled ``non-visited''. Then let $T_{i}$ ($T_{j}$, respectively) be a PC Euler trail of $C_i$ ($C_{j}$, respectively). The edge $v_iv_j$ in $T$ implies that there is an edge between $v_i$ and $v_j$ in $H$. Consequently, we can switch between $T_i$ and $T_j$ via some good vertex $y_{ij}\in V(C_i)\cap V(C_j)$. Suppose $x_{ij}y_{ij}$ is an arc in $A(C_i)$. Then we may insert $T_{j}$ into $T_{G}$ at vertex $y_{ij}$ right behind the arc $x_{ij}y_{ij}$. And then we change the label $v_j$ to ``visited''.

Note that after each insertion, the trail $T_{G}$ remains properly colored as $T_j$ starts with color $3-\phi(x_{ij}y_{ij})$ and ends with color $\phi(x_{ij}y_{ij})$. Moreover, the depth-first tree $T$ contains every vertex in $H$, and each vertex (except $v_1$) has exactly 1 parent in $T$. Consequently every vertex would be labelled ``visited'' eventually. And exactly one copy of PC Euler trail of $C_i$ is inserted into $T_G$ for $i\in [s]$. Therefore, when all vertices in $T$ are labelled ``visited'', $T_{G}$ is a PC Euler trail of $G$. Thus, $G$ is PC Euler.
\end{proof}

\noindent\textbf{Proof of Theorem \ref{theorem:EulerCondition}:}
On the one hand, if $G$ is PC Euler, then  any PC Euler trail of $G$ is a feasible solution for CPP-2ACD on $G$, therefore $G$ is PC trail-connected by Lemma \ref{lemma: solutionCondition2}.
Moreover, the existence of a PC Euler trail $T$ in $G$ implies that $d^{-}_{i,G}(v) = d^{+}_{3-i,G}(v)$ holds for any vertex $v\in V(G)$ and any color $i\in[2]$, thus $G$ is color-balanced.

On the other hand, if $G$ is PC trail-connected then there is no PC circuit which is bad in $G$ by Lemma \ref{lemma: noBad}. Moreover, if $G$ is color-balanced, then $G$ can be decomposed into a set of PC circuits by Lemma \ref{lemma: sufficiency}. Consequently, if $G$ is PC trail-connected and color-balanced, then  $G$ can be decomposed into a set of PC circuits, each of which is good in $G$. Then it follows from Lemma \ref{lemma: goodDecomposition} that $G$ is PC Euler. \qed


\subsection{Polynomial Time Algorithm for CPP-2ACD}
Let $G$ be a 2-arc-colored weighted digraph which is PC trail-connected.
By Theorem \ref{theorem:EulerCondition}, a 2-arc-colored directed multigraph is PC Euler if and only if it is color-balanced and PC trail-connected. So any color-balanced supergraph of $G$ is a feasible solution for CPP-2ACD on $G$.

In this section, our main result is the following Theorem \ref{theorem: polynomial}, in which we provide a polynomial-time algorithm to find an optimal solution for the Chinese Postman Problem on $G$.

\begin{theorem}\label{theorem: polynomial}
There is a polynomial-time algorithm which solves the CPP-2ACD on any 2-arc-colored weighted digraph $G$.
\end{theorem}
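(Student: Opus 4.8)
The plan is to reduce CPP-2ACD to a minimum-cost flow problem. First I would observe that, by Theorem~\ref{theorem:EulerCondition} together with Lemma~\ref{lemma: solutionCondition2}, feasible solutions of CPP-2ACD on $G$ are in exact correspondence with color-balanced supergraphs of $G$: a PC closed walk traversing every arc at least once, read with multiplicities, is precisely a PC Euler trail of the directed multigraph $G'$ obtained from $G$ by adding extra copies of arcs, and the weight of the walk equals $\omega(G')$. Since $G$ is PC trail-connected and adding copies of existing arcs cannot destroy PC trail-connectivity, any such supergraph that is color-balanced is automatically PC trail-connected, hence PC Euler by Theorem~\ref{theorem:EulerCondition}; conversely every feasible walk yields a color-balanced $G'$. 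Thus solving CPP-2ACD amounts to choosing, for each arc $f$ of $G$, a number $x_f \ge 0$ of \emph{extra} copies so that $G'=G+X$ is color-balanced and $\sum_f x_f\,\omega(f)$ is minimum (the one mandatory copy of each arc contributes the fixed amount $\omega(G)$). If $G$ is not PC trail-connected there is no feasible solution, and this can be tested in polynomial time by Corollary~\ref{corollary: sufficientCondition}.

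Next I would linearize the color-balance condition. Recall that in the $2$-colored case a vertex $v$ is color-balanced iff $d^{+}_{1,G}(v)=d^{-}_{2,G}(v)$ and $d^{+}_{2,G}(v)=d^{-}_{1,G}(v)$; that is, outgoing color-$1$ arcs are matched against incoming color-$2$ arcs, and outgoing color-$2$ arcs against incoming color-$1$ arcs. To turn this into flow conservation I would split each vertex $v$ into two nodes $a_v,b_v$ and build an auxiliary digraph $H$: redirect each color-$1$ arc $uv$ of $G$ to the $H$-arc $a_u b_v$ and each color-$2$ arc $uv$ to $b_u a_v$, keeping the weight of $f$ as the cost of the corresponding $H$-arc. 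A nonnegative integer $x_f$ on that $H$-arc represents the number of extra copies of $f$ added. Then the outflow minus inflow at $a_v$ equals the number of extra color-$1$ arcs leaving $v$ minus the number of extra color-$2$ arcs entering $v$, so the balance equation $d^{+}_{1,G+X}(v)=d^{-}_{2,G+X}(v)$ is equivalent to the net flow out of $a_v$ being exactly $s(a_v):=d^{-}_{2,G}(v)-d^{+}_{1,G}(v)$; symmetrically, $d^{+}_{2,G+X}(v)=d^{-}_{1,G+X}(v)$ corresponds to net flow $s(b_v):=d^{-}_{1,G}(v)-d^{+}_{2,G}(v)$ out of $b_v$. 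Hence $G+X$ is color-balanced iff $x$ is a nonnegative flow in $H$ meeting the supplies $s$. These supplies sum to zero, and a feasible $x$ exists because, by Lemma~\ref{lemma: solutionCondition2}, the PC trail-connected $G$ has a feasible CPP solution, which provides one.

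Finally I would invoke min-cost flow. The instance above --- nonnegative arc costs, zero lower bounds, infinite (equivalently, capacity $2|A(G)|$) upper bounds, and prescribed supplies summing to zero --- is a standard minimum-cost transshipment problem, solvable in polynomial time, and by total unimodularity it admits an integral optimum $x^{*}$. Letting $X$ consist of $x^{*}_f$ extra copies of each arc $f$, the multigraph $G+X$ is color-balanced and PC trail-connected, hence PC Euler by Theorem~\ref{theorem:EulerCondition}, so a PC Euler trail --- a feasible closed walk --- can be extracted in polynomial time by the constructive part of that theorem. Its weight is $\omega(G)+\sum_f x^{*}_f\,\omega(f)$, which is minimum by the correspondence of the first paragraph, so it is an optimal solution. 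The main obstacle, and the crux of the argument, is the second step: recognizing that the two color-balance equations per vertex, which appear to couple in- and out-degrees across colors, become ordinary flow-conservation constraints after the port-splitting that sends color-$1$ and color-$2$ arcs into different copies of their endpoints. Once this reduction is set up correctly, polynomiality is immediate from classical min-cost flow.
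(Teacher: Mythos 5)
Your proposal is correct, and it takes a genuinely different route from the paper. You reduce the residual problem (adding copies of arcs to make $G$ color-balanced at minimum cost) to a minimum-cost transshipment: split each vertex $v$ into ports $a_v,b_v$, send each color-1 arc $uv$ to $a_ub_v$ and each color-2 arc to $b_ua_v$, and observe that the coupled equations $d^+_{1}(v)=d^-_{2}(v)$, $d^+_{2}(v)=d^-_{1}(v)$ become ordinary flow conservation with supplies $s(a_v),s(b_v)$; integrality and polynomiality then come from standard network-flow theory. The paper instead builds, for each vertex and color, deficiency sets $X^{+}_{i}(v),X^{-}_{i}(v)$ of sizes $\theta^{+}_{i}(v),\theta^{-}_{i}(v)$, forms a complete bipartite graph $H$ on these tokens whose edge weights are minimum-weight PC FEV trails with prescribed endpoints and end colors (computed via Lemma~\ref{lemma:FEVtrail}), and computes a minimum-weight perfect matching by the Hungarian method; its Claims~2 and~3 then establish the two-way correspondence between matchings of $H$ and optimal arc-addition sets. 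These are the two classical, equivalent formulations of directed CPP lifted to the colored setting: your flow formulation buys a shorter correctness argument --- it entirely avoids the analogue of the paper's Claim~3 (that an optimal set $W$ of added arcs contains no PC cycle and decomposes into minimum-weight $e$-trails), since any integral feasible flow, structured or not, yields a color-balanced supergraph and hence a PC Euler one by Theorem~\ref{theorem:EulerCondition} --- while the paper's matching formulation makes the structure of the added arcs explicit as PC trails. Both rely on the same two pillars, Theorem~\ref{theorem:EulerCondition} and the feasibility test of Corollary~\ref{corollary: sufficientCondition}/Lemma~\ref{lemma: solutionCondition2}. One step you assert without proof (as does the paper, in its Claim~2) deserves a line: PC trail-connectivity of $G+X$ must hold for \emph{all} pairs of arcs of the multigraph, including pairs involving added copies, e.g.\ two parallel copies of the same arc; this does follow, using Proposition~\ref{proposition} to find a color-2 arc into the tail and a color-2 arc out of the head and then a PC trail of $G$ between them, but it is not literally ``the same trail works''.
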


Roughly speaking, in the proof of Theorem \ref{theorem: polynomial}, we first check whether $G$ is PC trail-connected. Then we find an optimal way to add copies of arcs which makes $G$ color-balanced. We introduce the following notion which is essential for both tasks.

A \textit{fixed end-vertex $u$-$v$ trail (FEV $u$-$v$ trail)} in $G$ is a trail from $u$ to $v$ where $u,v\in V(G)$ are the  given end-vertices and we allow $u=v$.
An FEV $v_1$-$v_p$ trail $T = v_1  v_2 \ldots v_{p-1}  v_p$ is PC if $\phi(v_i v_{i+1}) \neq \phi(v_{i+1}v_{i+2})$ for every
$i \in [p - 2]$. Note that we do not require that $\phi(v_{1}v_2)\neq \phi(v_{p-1}v_p)$ when $v_1 = v_p$.  Thus, a PC FEV $v_1$-$v_p$ trail $T$ might not be a PC closed trail if $v_1 = v_p$.

The following lemma helps us to decide whether $G$ is PC trail-connected.

\begin{lemma}\label{lemma:FEVtrail}
Given an arbitrary $c$-arc-colored digraph $G$ for $c\ge 2$, and two vertices $s,t\in V(G)$, we can in polynomial time check the existence of a PC FEV $s$-$t$ trail in $G$, and find a minimum weight PC FEV $s$-$t$ trail if it exists.
\end{lemma}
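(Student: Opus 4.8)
The plan is to reduce the problem of finding a minimum-weight PC FEV $s$-$t$ trail to a minimum-weight perfect matching (or minimum-cost flow / $T$-join) computation in an auxiliary undirected graph, mimicking the classical reduction used for PC walks in edge-colored undirected graphs (as in Kotzig's and Yeo's constructions). The central difficulty is that properness is a constraint on \emph{consecutive} arcs at each vertex, so I cannot work with $G$ directly: I must build a gadget at every vertex that encodes the colour-alternation condition, and then read off a trail from a matching or flow solution in that gadget.

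First I would construct an auxiliary graph $H$ as follows. Split each vertex $v$ into colour-indexed ``ports'': for each colour $i$ that appears on an arc incident with $v$, create an in-port $v_i^-$ and an out-port $v_i^+$. Every arc $uw$ of colour $i$ and weight $\omega(uw)$ in $G$ becomes an edge joining the out-port $u_i^+$ to the in-port $w_i^-$, carrying weight $\omega(uw)$. Inside each vertex gadget I add zero-weight ``transition'' edges joining each in-port $v_i^-$ to each out-port $v_j^+$ with $j\neq i$; traversing such a transition edge corresponds to passing through $v$ while switching from an incoming arc of colour $i$ to an outgoing arc of a different colour $j$, which is exactly the PC condition at an internal vertex. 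The two end-vertices $s,t$ receive special treatment so that the trail is allowed to start at $s$ on any colour and finish at $t$ on any colour (and the case $s=t$ is handled by \emph{not} forcing $\phi(v_1v_2)\neq\phi(v_{p-1}v_p)$, matching the FEV convention stated just before the lemma). The key point is that an Eulerian-type parity/degree condition in $H$, enforced by a $T$-join or minimum-cost perfect matching where $T=\{s,t\}$ (or $T=\emptyset$ when $s=t$), corresponds precisely to a PC $s$-$t$ trail in $G$.

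Next I would prove the two directions of the correspondence. In one direction, any PC FEV $s$-$t$ trail in $G$ induces, port by port, a subgraph of $H$ in which every internal gadget uses a transition edge matching its entering and leaving colours, and the overall subgraph has the correct degree-parity structure with odd vertices exactly at $s$ and $t$; its $H$-weight equals the $G$-weight of the trail (transition edges being free). Conversely, from a minimum-weight subgraph of $H$ with the prescribed parity I would extract an $s$-$t$ walk in $G$ that is properly coloured by construction, and then invoke the standard shortening argument (already used in the proof of Lemma~\ref{lemma: solutionCondition2}) to reduce a PC $s$-$t$ walk to a PC FEV $s$-$t$ trail with the same first and last arcs and no greater weight. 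Because the reduction preserves weight exactly in both directions, a minimum-weight solution in $H$ yields a minimum-weight PC FEV $s$-$t$ trail, and emptiness of the feasible set in $H$ certifies non-existence in $G$.

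The step I expect to be the main obstacle is arguing the walk-to-trail reduction rigorously while respecting the proper-colouring constraint: deleting a repeated edge or contracting a closed sub-walk can, in a badly chosen gadget encoding, destroy properness at the splice point, so I must ensure the gadget guarantees that any minimum-weight $H$-subgraph decomposes into trail pieces whose concatenation stays properly coloured, and that the $s=t$ exception is threaded correctly through the parity set $T$. A secondary technical point is confirming that the auxiliary graph has size polynomial in $|V(G)|+|A(G)|$ (it does, since each vertex contributes at most $2c$ ports and $O(c^2)$ transition edges, and $c$ is a constant here for the intended application, but the construction works for general $c$) so that the matching/$T$-join subroutine runs in polynomial time, giving the claimed bound.
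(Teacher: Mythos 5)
Your reduction breaks at its central claim: an undirected parity condition (a $T$-join, or a perfect matching certifying ``odd degree exactly at $s$ and $t$'') cannot enforce the \emph{orientation} of the arcs of $G$, and this is fatal in the directed setting. In your graph $H$ every arc-edge and every transition edge joins an out-port to an in-port, so $H$ is bipartite, but nothing prevents a path (or $T$-join) in $H$ from using two arc-edges consecutively at the same in-port, which corresponds to traversing the second of those arcs of $G$ from head to tail. Concretely, take $V(G)=\{s,a,b,t\}$ with the three arcs $s\rightarrow a$, $b\rightarrow a$, $b\rightarrow t$, all of color $1$. Then $G$ has no $s$-$t$ trail at all (the only arc leaving $s$ ends at the sink $a$), yet $H$ contains the path $s_1^{+}\, a_1^{-}\, b_1^{+}\, t_1^{-}$, whose degrees are odd exactly at the ports of $s$ and $t$ and which uses no transition edge; your algorithm would therefore return a ``minimum weight PC FEV $s$-$t$ trail'' that does not exist. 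The same example shows that the extracted walk need not even be properly colored, since properness is only tested across transition edges, never between two arc-edges meeting at a common in-port. This is not a repairable detail of the gadget but a conceptual obstruction: matching/$T$-join techniques are orientation-blind, which is exactly why the undirected machinery of Kotzig and Yeo does not transfer to digraphs (recall that deciding the existence of a PC cycle in a 2-arc-colored digraph is already NP-complete \cite{gutin1998note}).

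The paper avoids parity arguments altogether and works with a \emph{directed} auxiliary graph: its vertices are the arcs of $G$ (plus a source $(x,s)$ and a sink $(t,y)$), and there is an arc from $(u,v)$ to $(v,w)$ exactly when $uv,vw\in A(G)$ and $\phi(uv)\neq\phi(vw)$; the weights of the arcs of $G$ become vertex weights, which are then moved onto arcs by vertex splitting, and a minimum weight source-sink path is computed by Dijkstra's algorithm. Directedness of the auxiliary graph enforces orientation, the color test on its arcs enforces properness, and---the point your proposal also misses---a directed \emph{path} there visits distinct vertices, i.e.\ distinct arcs of $G$, so the no-repeated-arc condition of a trail comes for free; the walk-to-trail shortening that you single out as your main obstacle never arises. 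Your construction can be repaired along the same lines: orient every arc-edge from out-port to in-port and every transition edge from in-port to out-port, and compute a shortest directed source-sink path (your parenthetical ``minimum-cost flow'' option, with unit capacities); but then the $T$-join/matching machinery, which is what your argument actually runs through, is both unnecessary and, as shown above, unsound.
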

\begin{proof}
Let $G$ be a $c$-arc-colored weighted digraph with arc weight $\omega: A(G) \rightarrow \mathbb{R}_{\geq 0}$ and arc coloring $\phi: A(G) \rightarrow [c]$. Define an auxiliary digraph $H$ as follows. Let the vertex set of $H$ be $\{(u,v): uv\in A(G)\} \cup \{(x,s),(t,y)\}$. For any two incident arcs $uv, vw\in A(G)$, with $\phi(uv) \neq \phi(vw)$, we add an arc from $(u,v)$ to $(v,w)$ into $H$. We also add an arc from $(x,s)$ to $(s,w)$ for each arc $sw\in A(G)$, and an arc from $(z,t)$ to $(t,y)$ for each arc $zt\in A(G)$. We define the vertex weight function  $\omega: V(H) \rightarrow \mathbb{R}_{\geq 0}$ in the following way. Let $\omega((u,v)) = \omega_{G}(uv)$, and $\omega((x,s))=\omega((t,y)) =0$. This completes the description of $H$. Basically, every vertex in $H$ has the weight of its corresponding arc in $G$, except for $(x,s)$ and $(t,y)$.

We prove that a PC FEV $s$-$t$ trail in $G$ corresponds to a directed path from $(x,s)$ to $(t,y)$ in $H$, with exactly the same weight.

On the one hand, let $P = (x,s)(s,u_1)(u_1,u_2)\ldots $$(u_{p-1},u_p)$$(u_p,t)$$(t,y)$ be a directed path in $H$ from $(x,s)$ to $(t,y)$. By the construction of $H$, there is an arc from $(u,v)$ to $(v,w)$ if and only if $\phi(uv) \neq \phi(vw)$. Thus $T = s u_1 u_2\dots u_{p-1}u_pt$ is a PC FEV $s$-$t$ trail in $G$. Note that the weight of $T$ equals to the weight of $P$.

On the other hand, consider a minimum weight PC FEV $s$-$t$ trail $T' = s v_1 v_2\dots v_q t$ in $G$. Then $Q = (x,s)$$(s,v_1)$$(v_1,v_2)$$\ldots$$(v_{p-1},$$v_p)$$(v_q,t)$$(t,y)$ is a directed path in $H$ from $(x,s)$ to $(t,y)$. By the construction of $H$, the weight of $Q$ equals to the weight of $T'$. It remains to observe that $Q$ is a minimum weight directed path from $(x,s)$ to $(t,y)$, otherwise there is a PC FEV $s$-$t$ trail with weight smaller than that of $T'$, a contradiction.

In polynomial time, we can check if there is a directed path from $(x,s)$ to $(t,y)$ in $H$, and compute one with minimum weight if it exists.
Indeed, in a vertex weighted digraph $J$, for any two vertices $x, y\in V(J)$, computing a minimum weight directed path from $x$ to $y$ can be done in polynomial time.  We construct an weighted graph $J'$ from $J$ by splitting each vertex $v\in V(J)$ into an arc $vv'$, and assign weight $\omega_{J}(v)$ to the arc $vv'$. Moreover, we assign weight 0 to all the other arcs in $J'$. It follows that a minimum weight $x$-$y$ path in $J$ corresponds to a minimum weight $x$-$y'$ path in $J'$, which can be computed in polynomial time using Dijkstra's algorithm. Observe that there is no directed path from $(x,s)$ to $(t,y)$ in $H$ if and only if the Dijkstra's algorithm fails to compute a minimum weight directed path from $(x,s)$ to $(t,y)$ in $H'$.

A minimum weight PC FEV $s$-$t$ trail in $G$ corresponds to a minimum weight directed path from $(x,s)$ to $(t,y)$ in $H$.
By the above arguments, we can check the existence of a PC FEV $s$-$t$ trail in $G$, and find a minimum weight PC FEV $s$-$t$ trail if it exists, in polynomial time.
\end{proof}

\begin{corollary}\label{corollary: sufficientCondition}
Given a $c$-arc-colored digraph $G$, we can check whether $G$ is PC trail-connected in polynomial time.
\end{corollary}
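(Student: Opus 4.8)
The plan is to reduce the test of PC trail-connectivity to a single strong-connectivity check in essentially the auxiliary digraph already constructed in the proof of Lemma \ref{lemma:FEVtrail}. Recall that from $G$ one forms a digraph $H$ whose vertices are the arcs of $G$, written $(u,v)$ for $uv\in A(G)$, with a directed arc from $(u,v)$ to $(v,w)$ precisely when these two arcs are incident and $\phi(uv)\neq\phi(vw)$. For the present task the extra source/sink vertices $(x,s),(t,y)$ used in Lemma \ref{lemma:FEVtrail} are not needed, so I would drop them and keep only the ``line-graph'' part. The central observation is that a PC trail in $G$ starting with arc $f_1=u_1v_1$ and ending with arc $f_2=u_2v_2$ corresponds bijectively to a directed path in $H$ from vertex $(u_1,v_1)$ to vertex $(u_2,v_2)$, by exactly the correspondence established in the proof of Lemma \ref{lemma:FEVtrail}.

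First I would verify this correspondence carefully. A directed path $(u_1,v_1)=(a_0,a_1),(a_1,a_2),\ldots,(a_{k-1},a_k)=(u_2,v_2)$ in $H$ yields the walk $a_0a_1\ldots a_k$ in $G$; the arc condition in $H$ guarantees $\phi(a_{i-1}a_i)\neq\phi(a_ia_{i+1})$ for every $i$, so the walk is properly colored, and since the vertices of $H$ are the arcs of $G$, the distinctness of the path's vertices means no arc of $G$ is repeated, giving a PC trail whose first arc is $f_1$ and whose last arc is $f_2$. Conversely, every such PC trail reads off a directed path in $H$. The points needing care are that a path in $H$ (distinct $H$-vertices) matches a trail in $G$ (distinct $G$-arcs), and that the color constraints between consecutive arcs, with no constraint imposed between $f_1$ and $f_2$ themselves, are faithfully encoded by the arcs of $H$.

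Given the correspondence, $G$ is PC trail-connected if and only if for every ordered pair of arc-vertices of $H$ there is a directed path between them, that is, if and only if $H$ is strongly connected; the reflexive case $f_1=f_2$ is covered by the trivial length-zero path. Since a directed walk between two vertices exists iff a directed path does, this is precisely strong connectivity of $H$, which can be tested in time linear in $|V(H)|+|A(H)|=O\bigl(|A(G)|+\sum_{v} d^{-}(v)\,d^{+}(v)\bigr)$, polynomial in the size of $G$, via e.g. Tarjan's algorithm or a pair of searches.

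I expect the only delicate point to be pinning down this bijection so that both the ``trail'' condition (no repeated arc) and the proper-coloring condition are captured simultaneously, together with checking that the reduction from ``a PC trail exists for every pair of arcs'' to ``$H$ is strongly connected'' is legitimate: in particular that passing from path-existence to walk-existence does not change the answer, and that the degenerate cases ($G$ with a single arc, or $f_1=f_2$) are consistent with the strong-connectivity formulation. Once this is settled, the polynomial-time claim follows immediately from standard strong-connectivity algorithms.
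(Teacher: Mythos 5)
Your proposal is correct, and it takes a genuinely different route from the paper's proof. The paper also relies on the ``line digraph'' idea underlying Lemma \ref{lemma:FEVtrail}, but it uses it pairwise: for each of the $O(|A(G)|^2)$ ordered pairs of arcs $u_1v_1,u_2v_2$ it forms a modified graph by deleting every outgoing arc of $u_1$ other than $u_1v_1$ and every incoming arc of $v_2$ other than $u_2v_2$, and then runs the FEV $u_1$-$v_2$ test of Lemma \ref{lemma:FEVtrail} on that graph. Your version builds the line digraph once and performs a single strong-connectivity test, which is simpler and faster (one linear-time search instead of quadratically many shortest-path computations). You are also consistent with the paper on the definitional point you flag: both your check and the paper's only enforce proper coloring at internal transitions, with no condition between the last and first arcs when the trail happens to be closed; this is exactly the FEV convention and matches how trail-connectivity is used elsewhere in the paper (e.g.\ in the proof of Lemma \ref{lemma: solutionCondition2}). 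In fact your formulation buys more than efficiency: it sidesteps the deletion trick, which as literally stated in the paper is problematic. A PC trail from $u_1v_1$ to $u_2v_2$ may revisit $u_1$ and leave it through a \emph{different} outgoing arc, and such a witness is destroyed by the deletion. For instance, take arcs $ab,bc,ca,ad,de$ colored $1,2,1,2,1$ respectively: then $ab\,bc\,ca\,ad\,de$ is a PC trail from $ab$ to $de$, yet after deleting $ad$ (an outgoing arc of $u_1=a$ other than $ab$) the vertex $e$ becomes unreachable from $a$, so the pairwise test of the paper reports failure for this pair. In your line digraph this cannot happen: fixing the endpoints $(u_1,v_1)$ and $(u_2,v_2)$ of the sought directed path already pins down the first and last arcs of the trail, while the other arcs out of $u_1$ or into $v_2$ are just ordinary vertices of $H$ that the path is free to pass through.
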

\begin{proof}
To decide if a given $c$-arc-colored digraph $G$ is PC trail-connected, we just need to check the existence of a PC trail between any pair of arcs $u_1v_1, u_2v_2$ in $G$. Let $H$ be the arc-colored digraph we obtain by deleting from $G$ all outgoing arcs of $u_1$ except $u_1v_1$, and all incoming arcs of $v_2$ except $u_2v_2$. Then a PC trail from $u_1v_1$ to $u_2v_2$ in $G$ corresponds to a PC FEV $u_1$-$v_2$ trail in $H$. As proved in Lemma \ref{lemma:FEVtrail}, checking the existence of a PC FEV $u_1$-$v_2$ trail in $H$ can be done in polynomial time. Note that there are at most $|A(G)|^2$ different arc pairs in $G$, therefore we can decide whether $G$ is PC trail-connected in polynomial time.
\end{proof}

\noindent\textbf{Proof of Theorem \ref{theorem: polynomial}:}
First check whether $G$ is PC trail-connected, which can be done in polynomial time by Corollary \ref{corollary: sufficientCondition}. If $G$ is not PC trail-connected, then there is no solution for CPP-2ACD on $G$, according to Lemma \ref{lemma: solutionCondition2}.

Now assume that $G$ is PC trail-connected. Then by Theorem \ref{theorem:EulerCondition}, we just need to add copies of some arcs in $G$ to make it color-balanced. To decide which set of arcs to be added copies of, we construct, in polynomial time, an undirected weighted complete bipartite graph $H$.

We build the graph $H$ in the following way. For the given graph $G$, define $\theta_{i}^{-}(v)$ and $\theta_{i}^{+}(v)$ for each $v\in V(G)$ and $i\in[2]$ as
$$
 \theta_{i}^{+}(v) = \max\{0, d^{-}_{3-i}(v)-d^{+}_{i}(v)\}, \theta_{i}^{-}(v) = \max\{0, d^{+}_{3-i}(v)-d^{-}_{i}(v)\}.
$$
Note that we need to add at least $\theta_{i}^{+}(v)$ ($\theta_{i}^{-}(v)$) copies of outgoing (incoming) arcs of $v$ colored $i$ in order to make $v$ color-balanced.

Let $X^{+}_{i}(v)$ ($X^{-}_{i}(v)$, respectively) be a vertex set of size $\theta^{+}_{i}(v)$ ($\theta^{-}_{i}(v)$, respectively) for each $v\in V(G)$ and $i\in[2]$. Note that at most one of $\theta^{+}_{i}(v)$ and $\theta^{-}_{3-i}(v)$ is not zero, for $i\in [2]$. Define $$X^{+}_{1} = \bigcup_{v\in V(G)} X^{+}_{1}(v), X^{+}_{2} = \bigcup_{v\in V(G)} X^{+}_{2}(v),$$ $$X^{-}_{1} = \bigcup_{v\in V(G)} X^{-}_{1}(v), X^{-}_{2} = \bigcup_{v\in V(G)} X^{-}_{2}(v).$$ Denote $X^{+} = \bigcup_{v\in V(G), i\in[2]} X^{+}_{i}(v)$, and $X^{-} = \bigcup_{v\in V(G), i\in[2]} X^{-}_{i}(v)$, and let $V(H) = X^{+} \cup X^{-}$. Both $X^{+}$ and $X^{-}$ are independent sets in $H$. 

Add an edge $xy$ between each pair of vertices $x\in X^{+}_{i}(u)$ and $y\in X^{-}_{j}(v)$, for any $i, j\in[2]$ and $u, v\in V(G)$, where we allow $i = j$ and/or $u=v$. Set the weight $\omega(xy)$ to be the minimum weight of a PC FEV $u$-$v$ trail in $G$ with starting color $i$ and ending color $j$. Such a PC $u$-$v$ trail in $G$ can be computed in the following way. Just delete all outgoing arcs of $u$ with color $3-i$ and all incoming arcs of $v$ with color $3-j$ and then compute the minimum weight PC FEV $u$-$v$ trail, which can be done in polynomial time by Lemma \ref{lemma:FEVtrail}. Note that $H$ is a complete bipartite graph as $G$ is PC trail-connected. This completes the description of $H$.

For an arbitrary arc $uv \in V(G)$, without loss of generality, assume it has color $1$, then it is counted once positively in $\Sigma_{u\in V(G)} (d_{1}^{+}(u)-d_{2}^{-}(u))$ and once negatively in $\Sigma_{v\in V(G)} (d_{2}^{+}(v)-d_{1}^{-}(v))$. It follows that
\begin{equation}\label{equation1}
\Sigma_{v\in V(G)}(d_{1}^{+}(v)-d_{2}^{-}(v)) + \Sigma_{v\in V(G)}(d_{2}^{+}(v)-d_{1}^{-}(v))=0.
\end{equation}

Define $$V_1=\{v\in V(G)| d_{1}^{+}(v)-d_{2}^{-}(v)>0\}, V_2=\{v\in V(G)| d_{1}^{+}(v)-d_{2}^{-}(v)<0\},$$  $$V_3=\{v\in V(G)| d_{2}^{+}(v)-d_{1}^{-}(v)>0\}, V_4=\{v\in V(G)| d_{2}^{+}(v)-d_{1}^{-}(v)<0\}.$$ Note that $V_1 \cap V_2 = \emptyset$, and $V_3 \cap V_4 = \emptyset$, therefore (\ref{equation1}) is equivalent to

\begin{IEEEeqnarray*}{rCl}\label{equation2}
\sum_{v\in V_1}(d_{1}^{+}(v)-d_{2}^{-}(v)) + \sum_{v\in V_3}(d_{2}^{+}(v)-d_{1}^{-}(v))&=& \IEEEyesnumber\\
-(\sum_{v\in V_2}(d_{1}^{+}(v)-d_{2}^{-}(v)) + \sum_{v\in V_4}(d_{2}^{+}(v)-d_{1}^{-}(v)))&=&\\
 \sum_{v\in V_2}(d_{2}^{-}(v)-d_{1}^{+}(v)) + \sum_{v\in V_4}(d_{1}^{-}(v)-d_{2}^{+}(v)).
\end{IEEEeqnarray*}

By the definitions of $\theta_{i}^{+}(v), \theta_{i}^{-}(v)$, $X^{+}_{i}(v)$ and $X^{-}_{i}(v)$ for $i\in [2]$, it follows that $$|X^{+}_{1}| = \Sigma_{v\in V_2}\theta_{1}^{+}(v) = \Sigma_{v\in V_2}(d_{2}^{-}(v)-d_{1}^{+}(v)),$$ $$ |X^{-}_{1}| = \Sigma_{v\in V_3}\theta_{1}^{-}(v) = \Sigma_{v\in V_3}(d_{2}^{+}(v)-d_{1}^{-}(v)),$$ $$|X^{+}_{2}| = \Sigma_{v\in V_4}\theta_{2}^{+}(v) = \Sigma_{v\in V_4}(d_{1}^{-}(v)-d_{2}^{+}(v)),$$ $$|X^{-}_{2}|= \Sigma_{v\in V_1}\theta_{2}^{-}(v) = \Sigma_{v\in V_1}(d_{1}^{+}(v)-d_{2}^{-}(v)).$$

Thus (\ref{equation2}) is equivalent to $|X^{+}|=|X^{+}_{1}|+|X^{+}_{2}|=|X^{-}_{1}|+|X^{-}_{2}|=|X^{-}|$. So there must exist a perfect matching in $H$.

Let $e=xy$ be an edge in $H$, with $x\in X^{+}_{i}(u)$ and $y\in X^{-}_{j}(v)$ where we may have $u=v$ and/or $i=j$. An \textit{$e$-trail} $T$ is a minimum weight PC FEV $u$-$v$ trail starting with color $i$ and ending with color $j$. Adding an $e$-trail $T$ to $G$ is to add a copy of each arc which appears in $T$.


We first give a claim about the effects of adding an $e$-trail to $G$.

\medskip
\noindent\textbf{Claim 1:} Let $e=xy$ be an edge in $H$. Adding an $e$-trail $T$ to $G$ has the following effects:
\begin{enumerate}
 \item For any vertex $v\in V(G)$, $d^{+}(v)-d^{-}(v)$ changes if and only if $T$ is open and $v$ is either its first or last vertex.
 \item If $v\in V(G)$ is neither the first nor the last vertex of $T$, then both $d^{+}_{1}(v)-d^{-}_{2}(v)$ and $d^{+}_{2}(v)-d^{-}_{1}(v)$ do not change.
 \item If $T$ is a closed trail starting and ending at $v$, let $i$ ($j$, respectively) be the color of its first (last, respectively) arc. Then $d^{+}_{i}(v)-d^{-}_{3-i}(v)$ increases by 1 if $3-i\neq j$, and it does not change if $3-i= j$; $d^{+}_{3-i}(v)-d^{-}_{i}(v)$ decreases by 1 if $i= j$, and it does not change if $i \neq j$.
 \item If $T$ is open, let $u$ be its first vertex, and $i$ the color of the first arc, then $d^{+}_{i}(u)-d^{-}_{j}(u)$ increases by 1, for any $j \in [2]$.
 Let $v$ be its last vertex, and $j$ the color of the last arc, then $d^{+}_{i}(v)-d^{-}_{j}(v)$ decreases by 1, for any $i \in [2]$.
\end{enumerate}
\noindent\textbf{Proof}. Just observe that each $e$-trail is a PC FEV trail, in which there is no repetition of any arc.\qed

\medskip
In the following, we prove that the minimum weight of a perfect matching in $H$ plus the weight of $G$ equals to the weight of an optimal solution for CPP-2ACD on $G$. Moreover, the minimum weight perfect matching of $H$ provides us an optimal way to make $G$ color-balanced.

On the one hand, we prove that given any perfect matching $\cal M$ of $H$, if we add an $e$-trail to $G$ for each edge $e\in \cal M$, we get a PC Euler digraph $G'$ (see Claim 2).  On the other hand, we show that for any PC Euler digraph $G''=(V(G), A(G)\cup W)$ which is obtained by adding arcs in $W$, we can decompose arcs in $W$ into a set $\cal F$ of $e$-trails which corresponds to a perfect matching of $H$ (see Claim 3).

\medskip
\noindent\textbf{Claim 2:}
Let $\cal M$ be any perfect matching of $H$, and $G'$ is the digraph we obtain by adding an $e$-trail to $G$ for each edge $e\in \cal M$. Then $G'$ is PC Euler.

\noindent\textbf{Proof}. To prove that $G'$ is PC Euler, by Theorem \ref{theorem:EulerCondition}, we need to show two conditions: $G'$ is PC trail-connected and $G'$ is color-balanced.

First, for any two arcs $f_1, f_2\in A(G)$, there is a PC trail $T$ from $f_1$ to $f_2$ in $G$, as $G$ is PC trail-connected. Observe that $T$ is also a PC trail from $f_1$ to $f_2$ in $G'$, as $G'$ is obtained by adding copies of some arcs to $G$. Consequently $G'$ is PC trail-connected.

Second, to show that $G'$ is color-balanced, we need to prove $d^{-}_{1, G'}(v) = d^{+}_{2,G'}(v)$ and $d^{-}_{2, G'}(v) = d^{+}_{1,G'}(v)$, for any vertex $v\in V(G')$. If $d^{-}_{i, G}(v) > d^{+}_{3-i,G}(v)$, then $|X^{+}_{3-i}(v)| = d^{-}_{i, G}(v) - d^{+}_{3-i,G}(v)$ and $|X^{-}_{i}(v)| =0$, for $i\in [2]$. Therefore $d^{-}_{i, G}(v) - d^{+}_{3-i,G}(v)$  $e$-trails  starting with an outgoing arc of $v$ colored with $3-i$ are added to $G$ and no $e$-trail ending with an incoming arc of $v$ colored with $i$ is added to $G$.
If $d^{-}_{i, G}(v) \leq d^{+}_{3-i,G}(v)$, then $d^{+}_{3-i, G}(v) - d^{-}_{i,G}(v)$  $e$-trails are added to $G$ ending with an incoming arc of $v$ colored with $i$ and no $e$-trail is added to $G$ starting with an outgoing arc of $v$ colored with $3-i$. Thus by Claim 1, $G'$ is color-balanced. It follows that $G'$ is PC Euler. \qed
\medskip

\noindent\textbf{Claim 3:} Let $G^{*}=(V(G), A(G)\cup W)$ be an optimal solution for CPP-2ACD on $G$ with minimum number of arcs, where $W$ is the set of added arcs. Then we can decompose $W$ into a set $\cal F$ of $e$-trails, such that there is an edge set $M = \{e\in E(H): \mbox{there is an $e$-trail in } \cal F\}$  which is a perfect matching in $H$.

\noindent\textbf{Proof}. Note that $G^{*}$ is color-balanced, hence in the graph induced by $W$, we have for any $v\in V(W)$ and $i\in [2]$
$$d^{+}_{3-i,W}(v) - d^{-}_{i,W}(v) = d^{-}_{i,G}(v) - d^{+}_{3-i,G}(v) =\theta^{+}_{3-i}(v) = |X^{+}_{3-i}(v)|,$$ $$d^{-}_{3-i,W}(v) - d^{+}_{i,W}(v) = d^{+}_{i,G}(v) - d^{-}_{3-i,G}(v)= \theta^{-}_{3-i}(v)= |X^{-}_{3-i}(v)|.$$ 

Define $W^{+}_{i} = \{u\in V(W): d^{+}_{i,W}(u)>0, d^{-}_{3-i,W}(u)=0 \}$ and $W^{-}_{j} = \{u\in V(W): d^{-}_{j,W}(v)>0, d^{+}_{3-j,W}(v)=0 \}$  for $i,j\in [2]$.
Observe that for any two vertices $u\in W^{+}_{i}$ and $v\in W^{-}_{j}$, if there is a PC FEV $u$-$v$ trail in $W$ starting with color $i$, and ending with $j$, it must be one with minimum weight in $G$. Otherwise, consider the graph $G^{*'} = G^{*} - T_1 + T_2$, where $T_1\in W, T_2\in G$ both are PC FEV $u$-$v$ trails starting with color $i$, and ending with color $j$, but $T_2$ has smaller weight than $T_1$. Note that $G^{*'}$ is also a PC Euler, as it is PC trail-connected and color-balanced, consequently $G^{*'}$ is a better solution than $G^{*}$, which is a contradiction to the assumption that $G^{*}$ is an optimal solution.

Observe that as $G^{*}$ is an optimal solution with minimum number of arcs for CPP-ACD on $G$, there is no PC cycle in $W$. Otherwise we may delete such PC cycles in $W$ from $G^{*}$, and get a better solution for CPP-ACD on $G$, a contradiction. It follows that we may decompose $W$ into a set $\cal F$ of PC FEV $u$-$v$ trails such that $u\in W^{+}_{i}$ and $v\in W^{-}_{j}$ for $i,j\in [2]$ where we allow $u=v$ and/or $i=j$. By the above arguments, we know each PC FEV $u$-$v$ trail has minimum possible weight, thus corresponds to an edge in $H$ between $X^{+}(u)$ and $X^{-}(v)$.
For any edge set $M = \{e'\in E(X^{+}_{i}(u),X^{-}_{j}(v)):$ there is an $e$-trail in $\cal F$ which starts at $u$ with color $i$ and ends at $v$ with color $j$, where we allow $u=v$ and/or $i=j \}$, there are $|X^{+}_{i}(v)|$ edges in $M$ that are incident with vertices in $X^{+}_{i}(v)$ and
$|X^{-}_{i}(v)|$ edges in $M$ that are incident with vertices in $X^{-}_{i}(v)$ for each $i\in [2]$. Moreover, $H$ is a complete bipartite graph. Consequently, we may choose $M$ such that any two edges in $M$ are not incident with each other.
And thus $M$ is a perfect matching in $H$.\qed

\medskip
 By Claim 2, for any perfect matching $\cal M$ in $H$, adding an $e$-trail to $G$ for each edge $e\in \cal M$ results in a PC Euler graph. Let $G''=(V(G), A(G)\cup W)$ be a PC Euler digraph with minimum weight and minimum number of arcs.
  By Claim 3,  $W$ corresponds to a perfect matching in $H$. Thus for a minimum weight perfect matching $\cal M^{*}$ in $H$, adding an $e$-trail to $G$ for each edge $e\in \cal M^{*}$ results in a minimum weight PC Euler graph $G^{*}$. So to solve CPP-2ACD on $G$, it suffices to compute a minimum weight perfect matching in $H$, which can be done in $O(|V(H)|^3)$ by the Hungarian method, thus we can solve CPP-2ACD on $G$ in polynomial time.\qed

\section{Discussions}
In this paper, we generalize the polynomial time solvability of Chinese Postman Problem on 2-edge-colored graphs to 2-arc-colored weighted digraphs. First, we give a necessary and sufficient condition for a $c$-arc-colored digraph to have a solution for CPP-ACD. We then provide a characterization of PC Euler 2-arc-colored digraphs and show how to find a PC Euler trail of such a graph in polynomial time. Using the characterization, we further show how to solve CPP-2ACD on a given 2-arc-colored digraph in polynomial time.

It would be interesting to see whether our result can be further extended to $c$-arc-colored digraphs for $c>2$. We leave this as an open problem. Note that being PC trail-connected and color-balanced is not sufficient for a $c$-arc-colored digraph to be PC Euler. Figure \ref{counterexample} gives a PC trail-connected color-balanced graph $H$ which is not PC Euler. It is easy to see that $H$ is color-balanced. It is also not hard to see that every arc $f_i$ of the cycle $vu_iu_{i+1}v$ can reach every arc $f_j$ of the cycle $vu_ju_{j+1}v$ using a trail starting at $f_i$ and ending at $f_j$ (in the case $\{i,j\} =\{1,3\}$, we will have to use arcs of the cycle $vu_5u_6v$). To see that $H$ has no PC Euler trail, observe that we cannot switch between cycles $vu_1u_{2}v$ and $vu_3u_{4}v$ twice as each switch will require arcs of the cycle $vu_5u_6v$.

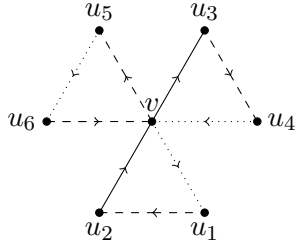
\begin{figure}\label{counterexample}
\centering
\begin{tikzpicture}[scale = 0.7]

\draw [](-1,1.732)node[above]{$u_5$};
\draw [](-2,0)node[left]{$u_6$};
\draw [](2,0)node[right]{$u_4$};
\draw [](1,1.732)node[above]{$u_3$};
\draw [](0,0)node[above]{$v$};
\draw [](-1,-1.732)node[below]{$u_2$};
\draw [](1,-1.732)node[below]{$u_1$};

\draw [,dashed](-1,-1.732)--(0,-1.732);
\draw [<-,dashed](0,-1.732)--(1,-1.732);

\draw [<-,dashed](1.5,0.866)--(1,1.732);
\draw [dashed](2,0)--(1.5,0.866);

\draw [dashed](-1/2,0.866)--(-1,1.732);
\draw [->,dashed](0,0)--(-1/2,0.866);

\draw [dashed](-2,0)--(0,0);
\draw [->,dashed](-2,0)--(-1,0);

\draw [dotted](-1.5,0.866)--(-2,0);
\draw [->,dotted](-1,1.732)--(-1.5,0.866);

\draw [dotted](1,0)--(0,0);
\draw [->,dotted](2,0)--(1,0);

\draw [dotted](1/2,-0.866)--(1,-1.732);
\draw [->,dotted](0,0)--(1/2,-0.866);

\draw [->,](-1,-1.732)--(-1/2,-0.866);
\draw [](-1/2,-0.866)--(0,0);

\draw [](1/2,0.866)--(1,1.732);
\draw [->,](0,0)--(1/2,0.866);

\draw [](-2,0)[fill]circle[radius=0.07];
\draw [](2,0)[fill]circle[radius=0.07];
\draw [](1,-1.732)[fill]circle[radius=0.07];
\draw [](1,1.732)[fill]circle[radius=0.07];
\draw [](-1,-1.732)[fill]circle[radius=0.07];
\draw [](-1,1.732)[fill]circle[radius=0.07];
\draw [](0,0)[fill]circle[radius=0.07];

\end{tikzpicture}
\caption{The graph $H$.}
 \label{fig:lemmaProof1}
\end{figure}

\end{document}